\newtheorem{theorem}{Theorem}
\newtheorem{lemma}{Lemma}
\newtheorem{definition}{Definition}
\newtheorem{assumption}{Assumption}
\newtheorem{remark}{Remark}
\theoremstyle{nonumberplain}
\newtheorem{proof}{Proof}
\begin{document}

\begin{frontmatter}

\title{Distributed energy control in electric energy systems
\thanksref{footnoteinfo}} 

\thanks[footnoteinfo]{This paper was not presented at any IFAC 
meeting. Corresponding author Rupamathi Jaddivada. Tel. +1-412-378-1479}

\author[Jaddivada]{Rupamathi Jaddivada}\ead{rjaddiva@mit.edu},    
\author[Ilic]{Marija D. Ilic}\ead{ilic@mit.edu},               

\address[Jaddivada]{Research Laboratory of Electronics, 10-034, 222 Memorial Drive, Cambridge, MA-02139, USA}  \address[Ilic]{Laboratory for Information and Decision Systems, 32-D726, 32 Vassar Street, Cambridge, MA-02139, USA}  

\begin{keyword}                           
Energy  dynamics, Generalized reactive power, Interaction variables, 
Distributed control,
Feedback linearizing control, Sliding mode control
\end{keyword}                             

\begin{abstract}                          
The power interactions of any component in electric energy systems 
with the rest of the system happen naturally, as governed by the energy conservation principles. 
There may, however, occur instances when the rate at which power gets generated by one component through local energy conversion is not exactly the same as that absorbed by rest of the system. This is when instabilities get induced. 
To model and control such instabilities, this paper generalizes  the notion of \textit{interaction variable} used to characterize  diverse system components in a unified manner. 
The same variable captures aggregate system-wide effects and sets reference points for multi-layered distributed output feedback control.
It has a physical interpretation of instantaneous power and generalized reactive power. 
The higher layer design utilizes the interactive energy state-space model to derive intermediate reactive power control, which becomes a control command to the lower layer physical model.
This command is implemented using either Feedback Linearizing Control (FBLC) or Sliding Mode Control (SMC), for which sufficient stability conditions are stated. 
This paper claims that the proposed design is fundamental to aligning dynamic interactions between components for stability and feasibility. 
Without loss of generality, we utilize a simple RLC circuit with a controllable voltage source for illustrations, which is a simplified representation of any controllable component in microgrids.

\end{abstract}

\end{frontmatter}

\section{Introduction}
\label{Sec:Intro}
\begin{table*}
\caption{\textbf{Commonly used notation}}
\begin{tabular}{p{0.015\linewidth}p{0.25\linewidth}p{0.015\linewidth}p{0.25\linewidth}p{0.015\linewidth}p{0.25\linewidth}}
\toprule
\multicolumn{6}{l}{\textbf{Variables}}\\
$x$ & State variable & 
$u$ & Control &
$r$ & port input \\
$m$ & Exogenous disturbance &
$y$ & Output of interest & 
$\tilde{x}$ & Extended state $\left[x,r,m\right]^T$\\
$w$ & \multicolumn{5}{l}{Dual variable associated with control($w^u$), disturbance ($w^m$) or port input ($w^r$)}\\
\multicolumn{6}{l}{\textbf{Superscripts}}\\
$r$ & Interaction port &
$u$ & Control port &
$m$ & Disturbance port \\
$\text{out}$ & Outgoing &
$\text{in}$ & Incoming &
$\text{ref}$ & Setpoint for tracking \\
\multicolumn{6}{l}{\textbf{Subscripts}}\\
$i$ & General component index &
$j$ & Neighboring component index &
$z$ & Variables in energy space \\
\multicolumn{6}{l}{\textbf{Energy space variables (Definitions available in Appendix \ref{Sec:EnergyDef})}}\\
 $P$  & \multicolumn{5}{l}{Instantaneous power
 (Definition \ref{Defn:RealPower}) Eg: $P^u = u w^u$} \\
 $\dot{Q}$  & \multicolumn{5}{l}{Rate of change of reactive power
 (Definition \ref{Defn:ReacPower}) Eg: $\dot{Q}^u = u \dot{w}^u - \dot{u} w^u$} \\
 $E$  & \multicolumn{2}{l}{Stored energy (Definition \ref{Defn:StoredEnergy})}&
 \multicolumn{1}{r}{$p$}  & \multicolumn{2}{l}{Time derivative of stored energy ($\dot{E}$)}\\
 \multicolumn{1}{r}{$E_t$}  & \multicolumn{2}{l}{Stored energy in tangent space (Definition \ref{Defn:StoredEnergyTangent})}&
 $\tau$  & \multicolumn{2}{l}{Time constant (Definition \ref{Defn:TimeConstant})}\\
 $x_z$  & \multicolumn{2}{l}{State variable in energy space $\left[E, p\right]^T$}&
 \multicolumn{1}{r}{$z$}  & \multicolumn{2}{l}{Interaction variable $\left[P,Q\right]^T$ (Definition \ref{Defn:IntVar})}\\
 \bottomrule
\end{tabular}
\label{Table:Notation}
\end{table*}


This paper considers the problem of distributed modeling and control design for the changing electric energy systems. 
These are large-scale complex systems for which control is needed to manage solar power, wind power and load disturbances. 
To pursue distributed control, we characterize components by their internal dynamics as well as by their interactions with the neighboring components. 
This is done by establishing physics-based structured modeling to manage complexity. 
We formalize the inherent multi-layered structures mathematically and build on these structures for deriving the multi-layered control design. 

From the component's standpoint, the control objective is to deliver the required power. However, as system conditions vary, this requires adjusting voltages in response to current deviations and vice-versa. Such adjustments lead to persistent time-varying disturbances, which appear as an incremental negative resistance characteristic undesirable for stable operation. Such disturbances have been a cause of perpetual concern for system stability \cite{bottrell2013dynamic,emadi2006constant,dragivcevic2015dc}. 
These problems are growing due to high penetration of intermittent power plants with low inertia. 

One of the several approaches to control energy systems subject to time-varying disturbances is to design robust centralized controllers immune to any disturbance in a pre-characterized set of disturbances. 
However, in this paper, we seek distributed controllers due to privacy restrictions and their implementation without requiring fast communications. 
Most existing distributed control design methods rely on time-scale separation between the models and control design objectives utilized at multiple control layers \cite{dorfler2015breaking,mohamed2011hierarchical}.  
When the assumptions on time-scale separation fail, each component gets affected by the net power disturbance resulting from interaction with the rest of the system. 
Such disturbances can not be treated as independent inputs since they are fast state-dependent variations.

We propose in this paper a novel modeling approach that lends itself to distributed control design through the exploitation of inherent structures resulting from energy conservation principles. 
The proposed method relies on information exchange between neighboring components. This provides
the information about the interactive dynamics as power is exchanged back and forth between the components and its environment.
Notably, this is a new area of research since the derived models  are not in standard state-space form. 
We provide sufficient mathematical  conditions for feasibility and stability  in terms of  physically intuitive energy and power flow variables. 

The typical design proposed in the literature is based on either hard-to-implement centralized controllers or on entirely local disturbance rejection-based controllers, leading to increased control effort and lack of robustness \cite{chang2014active,izquierdo2010electrical}.
While facilitating coordinated control, the existing distributed controllers require restrictive conditions on internal stability, such as those of zero-state detectability \cite{hill1976stability,wu2019robust}. 
It is therefore, hard to obtain general feasibility, stability, and robustness conditions due to the non-linearity of the resulting interconnected systems governed by general energy conservation principles. 

As a solution to the limitations mentioned above, \cite{ilic1993simple,ilic2007hierarchical} proposed a modeling and control that exploits the inherent structural properties of physical systems. In particular, this work showed that there exists a transformation of internal variables into an \textit{interaction variable}, which can be leveraged for the local control design to have provable distributed stabilization of the interconnected system. 
Interaction variable is defined irrespective of the complex internal energy conversion processes by utilizing the inherent structure of interactions across components in the system, resulting from the first energy conservation law.  
Over the last two decades, the interaction variable-based theory has been extended and utilized to design other types of nonlinear controllers, including control of power-electronics-based inverters, induction machines, generators, and residential demand-side technologies \cite{miao2020high,ilic2020plug,ilic2019toward}. 
In this paper, we extend the previously proposed interaction variable-based approach to accommodate the interactive time-varying disturbances.
\textit{Contributions and outline:}
In Section \ref{Sec:Formulation}, we formally pose the distributed control problem as a tracking and output-disturbance decoupling problem from the component's perspective. 
We present a general multi-layered modeling framework that helps re-pose the distributed control problem drawing on the inherent intuitive structure resulting from the underlying physics, which is summarized in Section \ref{Sec:Modeling}. 
To do so, we extend the previously proposed interaction variable-based approach introduced  in \cite{ilic1993simple, ilic1996hierarchical}
by defining an intuitive two-dimensional common output variable of interest and its dynamics based on fundamental energy conservation principles. 
We thereby obviate the need to perform complex coordinate transformations searching for common output variables that would otherwise be needed \cite{decarlo1988variable, guo2000nonlinear}.
Utilizing this modeling, we present an overview of the interactive control in Section \ref{Sec:InteractiveCtrl}. In particular, we show 
how the selection of common variables in energy space
assists in designing a control cognizant of both component-level stability conditions resulting from the internal dynamics and system-level feasibility conditions pertaining to the interconnection dynamics, derived in Section \ref{Sec:Control} and \ref{Sec:Feasibility} respectively. 
While these are only sufficient conditions, they have far reaching implications since they
offer intuitive understanding in terms of energy and power flows and are modular. 
In Section \ref{Sec:Control}, 
we extend the control design to an easy-to-implement switching control robust to modeling and parameter uncertainties. This approach notably supports data-enabled decision-making due to its minimal dependence on internal physics-based models. 
We take an example of an RLC circuit in Section \ref{Sec:RLCCkt} and illustrate the performance and robustness of our proposed energy control.
Finally, we provide some concluding thoughts in Section \ref{Sec:Conclusion}.

\section{Distributed control problem formulation}
\label{Sec:Formulation}
Without loss of generality, consider a simple electrical energy system comprising two components as shown in Fig. \ref{fig:system_general}. 
In order to formally pose the distributed control problem
we first devise the component models.  One approach to obtaining these models is tearing of the interconnected system model by defining additional variables at the intersection of two components, which are also referred to as port inputs, denoted as $r_i$. These inputs dictate the interaction of the component $i$ with the rest of the system \cite{willems2007behavioral}. \begin{figure}[!htbp]
\begin{center}
\includegraphics[width=1.0\linewidth]{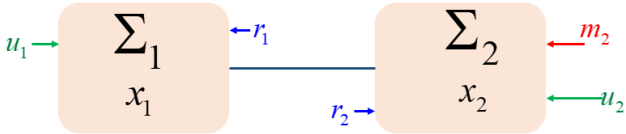}
\caption{Interconnected system comprising two components $\Sigma_1$ and $\Sigma_2$ with local controllable input $u_1, u_2$ and exogenous disturbances $m_2$}
\label{fig:system_general}
\end{center}
\end{figure}
Algebraic relations such as Kirchoff's current and voltage laws dictate the relationship between the port inputs and state variables. 
This relation is abstracted through a map $L$ as shown in Eqn. \eqref{Eqn:Algebraic_Conv} \cite{decarlo1984component,bachovchin2015design}.
\begin{equation}
    0 = L(x_1,x_2, r_1, r_2)
    \label{Eqn:Algebraic_Conv}
\end{equation}
A schematic of component models comprising the interconnected system is shown in Figure \ref{fig:system_general} and the corresponding dynamical models are given in Eqn. \eqref{Eqn:GeneralCompModel_COnv}. 
\begin{subequations}
\begin{align}
&\text{State Dynamics:} \qquad\qquad\qquad\qquad x_i(0) = x_{i,0}\notag\\
&\dot{x}_i = f_{x,i}(x_i) + g_i^r(x_i) r_i + g_i^m(x_i)m_i + g_i^u(x_i) u_i  \label{Eqn:StateGeneralDyn}\\
&\text{Outputs of interest}:
y_i = f_{y,i}(x_i, u_i, r_i, m_i)  \\
&\text{Common outputs}:
    z_i = f_{z,i}(x_i, u_i, r_i, m_i)    
\end{align}
\label{Eqn:GeneralCompModel_COnv}
\end{subequations}
The notation used here is summarized in Table \ref{Table:Notation}.


The major difficulty in distributed control problem is to find the common output variable $z_i$. 
Assuming such common output variable can be found, the control problem can be posed as follows: \\
\textit{\textbf{(P1):}
The objective is to design a smooth feedback control $u_i = k_i(y_i, z_j) ~\forall i \in \{1,2\}, j \ne i$ utilizing local information ($y_i$) and minimal information sent by neighboring component $\Sigma_j$ ($z_j$),  to satisfy following objectives:
\begin{enumerate}
    \item Dissipativity: Each component $\Sigma_i$ in closed loop is dissipative \cite{willems2007behavioral} w.r.t. a supply function $w_i\left(\left[m_i,r_i\right], y_i\right)$.
    \item Internal stability: It is further desired of the supply function to satisfy $w_i\left(\left[m_i,r_i\right], y_i\right) \le 0 \quad \forall y_i$ and admissible component inputs $r_i$ and $m_i$.
    \item Output regulation: Local outputs of interest $y_i(t)$ must 
    be regulated to a fixed reference value $y_i^{ref}$
    \item Feasibility: The common output variables track a consistent reference given by smooth map $\psi_i$ as $z_i^{ref}(t) = \psi_i(x_i(t), z_j(t)) ~~ \forall j \in \mathcal{C}_i$ where $\mathcal{C}_i$ represents the set of component indexes corresponding to the neighbors. 
\end{enumerate} 
}

The objectives stated in (P1) are multi-fold including stabilization of internal dynamics, tracking of common output variables and the robustness to local exogenous disturbances. 
It is apparent that the control objectives above can be contradicting each other. This problem only gets more complex with the different types of disturbances entering the system. 


\section{Interactive energy-based modeling}
\label{Sec:Modeling}
Next, we introduce a common output variable that helps solve the different control objectives posed in (P1).
We refer to  this common output variable as an interaction variable. 
Its physical meaning is  the energy and  reactive power absorbed by the component by virtue of its own energy conversion processes.
More fundamentally, it is defined as follows: 

\begin{definition}(Interaction Variable) \cite{ilic1993simple,jaddivada2020unified} 
\footnote{This definition was provided in particular for electric power systems under an assumption of real-reactive power decoupling \cite{ilic1993simple,ilic2007hierarchical}. We now further extend this notion by relaxing the decoupling assumption.}
\\
The interaction variable $z_i^{r,out}$ is defined as a function of local variables that satisfies the property 
\begin{subequations}
\begin{equation}
z_i^{r,out} = \text{constant}
\end{equation}
when all interconnections are removed. \\
Let $E_i$, $p_i$, $P_i^u$, $P_i^m$, $\dot{Q}_i^u$, $\dot{Q}_i^m$ represent the stored energy, rate of change of stored energy, instantaneous power at control terminal and disturbance terminal and generalized rate of reactive power at control and disturbance terminals respectively of component $i$.
Each of these variables is defined as a function of local state variables and state derivatives in Appendix \ref{Sec:EnergyDef}. 
Mathematically, the interaction variable is then defined as 
\begin{equation}
	\begin{array}{*{20}{l}}
{z_i^{r,{\rm{out}}} = \left[ {\begin{array}{*{20}{l}}
{\int\limits_0^t {\left( {{p_i}(s) + \frac{{{E_i}(s)}}{{{\tau _i}}} - P_i^u(s) - P_i^m(s)} \right)ds} }\\
{\int\limits_0^t {\left( { - {{\dot p}_i}(s) + 4{E_{t,i}}(s) - \dot Q_i^u(s) - \dot Q_i^m(s)} \right)ds} }
\end{array}} \right]}
\end{array}
		\label{Eqn_IntModelInv}
		\end{equation}
\end{subequations}
\label{Defn:IntVar}
\end{definition}

The existence and the defining structural properties of the interaction variable were proved in \cite{jaddivada2020unified}.
In order to differentiate the interactions resulting from internal energy conversion processes as per the Definition \ref{Defn:IntVar}, and the ones obtained as a result of interconnection, we utilize the superscripts `$\rm{out}$' and `$\rm{in}$' respectively.
The incoming interaction variable is a result of interconnection, as shown in the zoomed-out representation of the interconnected system in Fig. \ref{Fig:Cutset}. 
\begin{figure}[!htbp]
\begin{center}
\includegraphics[width=1.0\linewidth]{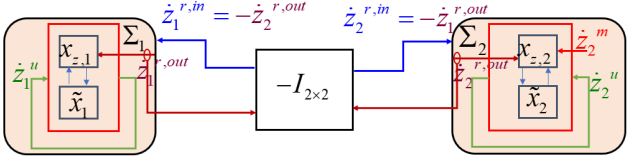}    
\caption{Zoomed-out representation for the interconnected system: 
Incoming interactions ($z_i^{r,in}$) are shown with blue arrows, while the outgoing ones ($z_i^{r,out}$) by virtue of local energy conversion dynamics are shown in brown for each of the components in the closed-loop.
After interconnection, the incoming interaction variable is equal to the negative of outgoing interaction variables of its neighbors.
}  
\label{Fig:Cutset}                                 
\end{center}                                 
\end{figure}

We next introduce the zoomed-in component models. The physical models introduced in Eqn. \eqref{Eqn:GeneralCompModel_COnv} are related to the interaction variable and its dynamics, through a bi-directional dynamic mapping as shown in Fig. \ref{CptModel_EnergySpaceInteractive}. 
\begin{figure}[!htbp]
	 	\begin{center} 
	 	\includegraphics[width=1.0\linewidth]{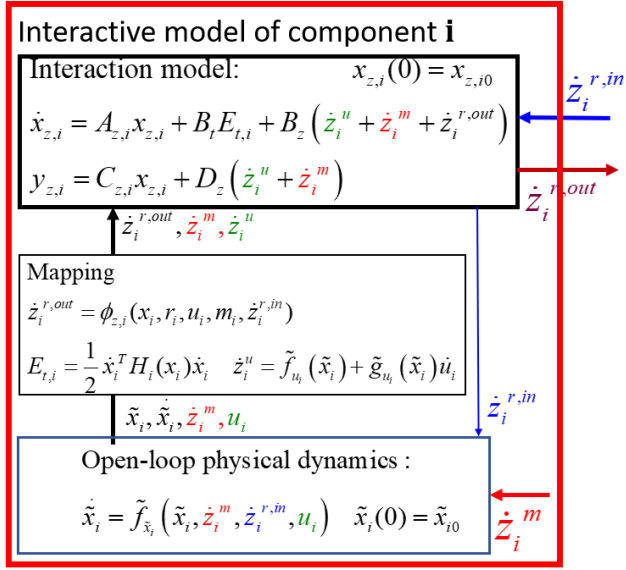}
	 	\caption{Interactive stand-alone model of an open-loop component $i$: 
	 		The lower layer models are utilized to compute the outgoing interaction variable $\dot{z}_i^{r,\rm{out}}$, which drive the higher-layer energy dynamics of the component.  The incoming interaction variable from the grid $\dot{z}_i^{r,\rm{in}}$, is utilized by the lower-layer models to evaluate the extended state trajectories $\tilde{x}_i = \left[x_i,m_, r_i\right]$ given their initial conditions. 
	 	}		
	 	\label{CptModel_EnergySpaceInteractive}
	 	\end{center}
\end{figure}

From this figure, it is important to note that the physical lower layer model is characterized now using extended state space $\tilde{x}_i = \left[x_i, m_i, r_i\right]^T$.
Here, we characterize the persistent disturbances and incoming interactions at the ports through the instantaneous power and generalized reactive power at ports by the vector $\dot{z}_i^m$ and $\dot{z}_i^{r,in}$ respectively.
In addition to the state dynamics, it now becomes imperative to model the dynamics of the disturbance and port inputs in order to include the dependence of instantaneous and reactive power entering the respective terminals ($\dot{z}_i^m, \dot{z}_i^{r,in}$). Characterization of these power variables, in turn, helps retain the linear structure of the higher layer energy space model in Fig. \ref{CptModel_EnergySpaceInteractive}, also elucidated in Eqn. \eqref{Eqn_IntModelLinear}. 
The dynamics of extended state variables are expressed in Eqn. \eqref{Eqn:InteractiveStdAloneModel}. 
\begin{equation}
\begin{array}{l}
{\rm{Extended~ state~ space~ model}}:~{{\tilde x}_i}(0) = {\left[ {\begin{array}{*{20}{c}}
{{x_{i0}}}&{{m_{i0}}}&{{r_{i0}}}
\end{array}} \right]^T}\\
\underbrace {\left[ {\begin{array}{*{20}{l}}
\begin{array}{l}
{{\dot x}_i}\\
{{\dot m}_i}
\end{array}\\
{{{\dot r}_i}}
\end{array}} \right]}_{{{\dot{\tilde{x}}}_i}} = \underbrace{\begin{array}{l}
\left[ {\begin{array}{*{20}{l}}
\begin{array}{l}
{f_{x,i}}({x_i}) + {g_i}^m({x_i}){m_i} + {g_i}^r({x_i}){r_i}\\
{f_{m,i}}({{\tilde x}_i}) + g_{m,i}^m({{\tilde x}_i}){m_i} + {g_{m,i}}^r({{\tilde x}_i}){r_i}
\end{array}\\
{{f_{r,i}}({{\tilde x}_i}) + g_{r,i}^m({{\tilde x}_i}){m_i} + {g_{r,i}}^r({{\tilde x}_i}){r_i}}
\end{array}} \right] + \\
\left[ {\begin{array}{*{20}{c}}
0&0\\
{g_{m,i}^{zm}({{\tilde x}_i})}&0\\
0&{g_{r,i}^{zr}({{\tilde x}_i})}
\end{array}} \right]\left[ \begin{array}{l}
{{\dot z}_i}^m\\
{{\dot z}_i}^{r,in}
\end{array} \right] + \left[ {\begin{array}{*{20}{l}}
\begin{array}{l}
{g_i}^u({x_i})\\
g_{m,i}^u({{\tilde x}_i})
\end{array}\\
{g_{r,i}^u({{\tilde x}_i})}
\end{array}} \right]{u_i}
\end{array}}_{{{\tilde f}_{{{\tilde x}_i}}}\left( {{{\tilde x}_i},{{\dot z}_i}^m,{{\dot z}_i}^{r,in},{u_i}} \right)}
\end{array}
\label{Eqn:InteractiveStdAloneModel}
\end{equation}
Here $f_{x,i}, g_i^r, g_i^m$ are the same functions defined in Eqn. \eqref{Eqn:StateGeneralDyn}. 
The rest of the functions are a result of expressions for reactive power at respective ports according to Definition. \ref{Defn:ReacPower}. 

\subsection{Distributed control problem posing in energy space}
We propose to characterize the output variables $y_i$ and $z_i$ in energy space for obtaining an intuitive map $\psi_i$ in the fourth objective of tracking in problem (P1) for re-interpreting the distributed control problem posed for a feasible and stable interconnected system.
We henceforth use the notation $y_{z,i}$ and $\dot{z}_i^{r,out}$ characterized in energy space to represent the output variables $y_i$ and $z_i$ respectively in the problem formulation (P1). 
The model for use with the problem formulation (P1) in transformed state space is shown in Eqn. \eqref{Eqn_IntModelLinear}. 
Here, the state variables are aggregate dynamical energy variables denoted as $x_{z,i} = \left[E_i, p_i\right]^T$.
\begin{subequations}
		\begin{align}
		&\text{Energy space state dynamics:} \qquad x_{z,i}(0) = x_{z,i0} \label{Eqn:IntModelState}\\
		&\dot{x}_{z,i} = A_{z,i} x_{z,i} + B_t E_{t,i}\left(\dot{x}_i\right) +B_{z} \left(\dot{z}_i^{r,out} + \dot{z}_i^u + \dot{z}_i^m\right)\notag\\
		& \text{Rate of change of common outputs: } ~~ z_{i}^{r,out}(0) = z^{r,out}_{i0} \notag\\
		&\dot{z}_i^{r, out} = \phi_{z,i}(x_i, r_i, u_i, m_i, \dot{z}_i^{r,in}) \label{Eqn:IntModelCommon}\\
		& \text{Outputs of interest: }  \notag\\
		&{y_{z,i}} = \underbrace {\left[ {\begin{array}{*{20}{c}}
{\frac{1}{{{\tau _i}}}}&0
\end{array}} \right]}_{{C_{z,i}}}{x_{z,i}} + \underbrace {\left[ { - 1\;\;0} \right]}_{{D_z}}\left( {{{\dot z}_i}^u + {{\dot z}_i}^m} \right)
\label{Eqn:IntModelOutput_v1}
		\end{align}
		\label{Eqn_IntModelLinear}
		\end{subequations}
The energy state space evolution is given in 
Eqn. \eqref{Eqn:IntModelState}, which was first postulated in \cite{ilic2018multi} as a generalization of energy conservation principles.
In Eqn. \eqref{Eqn_IntModelLinear}, the constant matrices and vectors used in the model are: 
		$B_{t} = \left[0, 4\right]^T,  B_z = \left[1~~ -1\right]^T  
		$ 	for any component and matrix $A_{z,i} = \left[ {\begin{array}{*{20}{c}}
			 - 1/{\tau_i}&0\\
			0&0
			\end{array}} 
		\right]$ 
			depends only on the time constant $\tau_i$ defined in Definition \ref{Defn:TimeConstant}.

We further define the common output variable as the interaction variable defined in Eqn. \eqref{Eqn_IntModelInv}. This variable can only be numerically computed by utilizing the analytical expressions given for the  rate of change of interaction variable 
through an abstract map $\phi_{z,i}$ of internal variables as shown in Eqn. \eqref{Eqn:IntModelCommon} \cite{jaddivada2020unified}. 
Since outgoing interaction variable by Definition \ref{Defn:IntVar} is a function of local states and state derivatives (function of port input $r_i$), the rate of change of outgoing interaction variable depends on the rate of change of incoming interaction variable. 
Such dependence makes the modeling framework inherently interactive. 
For more details on the interactive modeling approach, the reader is referred to \cite{ilic2018multi, ilic2018fundamental, jaddivada2020unified}.

We propose to characterize the output variable of interest to have an interpretation of power produced or absorbed by the component. 
Over relatively slower time scales, instantaneous power balance or reactive power balance can be set up as a control objective since one implies the other over these time scales. 
Furthermore, if only one physical control input is available, only one out of the two balance equations can be set as a control objective. 
We thus select the output variable of interest as the instantaneous power absorbed the component after energy dynamics settle i.e. after $\dot{E}_i = 0$. 

The control problem ${(P1)}$ can better be posed in reduced order linear energy space. 
However, the model is dependent on the internal dynamics and vice-versa as shown in Fig. \ref{CptModel_EnergySpaceInteractive}.
This notion of interactive modeling is also inherent in Willems' seminal paper on behavioral modeling of physical systems \cite{willems2007behavioral}. While standard state space modeling has been central to the development of control theory for decades, we emphasize that interactive modeling is instrumental for distributed implementation. Next, utilizing these interactive models, we propose a multi-layered interactive control design to solve the problem (P1).

\section{Multi-layered distributed control design}
\label{Sec:InteractiveCtrl}
In this section, we provide an overview of the proposed multi-layered control architecture, that utilize the inherent structure of linear energy space models for a provable control design. The proposed method constitutes two parts of the design summarized as follows:
\begin{itemize}
    \item Component-level: Given the incoming interaction variable $\dot{z}_i^{r,in}$ from the interconnection-level control, 
    map in Eqn. \eqref{Eqn:PsiMap} is utilized to obtain the reference point $y_{z,i}^{ref}$. The control objective of this layer is to ensure that the outputs of interest $y_{z,i}$ track the feasible reference point $y_{z,i}^{ref}$.  This is achieved by two-layered control at the component level as follows: 
    \begin{itemize}
        \item The linearity of the higher layer energy-space model shown in Fig. \ref{CptModel_EnergySpaceInteractive} is utilized to design a control in energy space that ensure $y_{z,i}$ tracks $y_{z,i}^{ref}$, 
        \item The lower layer implementation is then performed through a diffeomorphic map between the energy space variables and the conventional space variables at the control ports.
    \end{itemize}
    \item Interconnection-level: Given the ranges of outgoing interaction variable bottom-up by the components for a period of time, 
    the zoomed-out interconnected system model depicted through Fig. \ref{Fig:Cutset} is utilized to obtain optimal values of incoming interaction variable that need to be distributed among multiple components, so as to guarantee feasibility and efficient utilization of available controllers. 
    For a two-component interconnection of interest in this paper, optimal computation of incoming interaction variable over a period of time is not necessary. 
    We instead propose a simple modular feasibility conditions that each component in closed-loop can check for before getting connected to the interconnection with the grid. For a complete optimal control problem formulation, the interested readers are referred to \cite{ilic2020unified}. 
\end{itemize}
The feasibility of a component's interconnection with the rest of the system is dictated primarily by the existence of a solution for the interface variables dictated by Kirchoff's current or voltage laws. 
The port current and voltage variables can enter the component dynamical models non-linearly that makes the problem of answering the feasibility questions quite involved \cite{ilic2018fundamental}. 
It has been shown in \cite{ilic2018multi} that instantaneous power and generalized reactive power balance equations shown through the equality constraints in Fig. \ref{Fig:Cutset} imply satisfaction of Kirchoff's current and voltage laws both statically and dynamically. 
As a result, we can better answer feasibility questions under this new modeling framework, especially when systems are subject to persistent disturbances. 

The reference values of the outputs should then be equal to incoming interaction variable so that instantaneous power balance equation over relatively slower timescales is satisfied.  
In terms of the feasibility objective in problem (P1), 
an intuitive mapping $\psi_i: (x_i, \dot{z}_j^{r,out}) \rightarrow y_{z,i}^{ref} ~~\forall j \in \mathcal{C}_i$ is proposed as:
\begin{equation}
   y_{z,i}^{ref} = \left[ {1\;\;0} \right]\dot z_i^{r,in} =  - \left[ {1\;\;0} \right]\sum\limits_{j \in {{\mathcal{C}}_i}} {\dot z_j^{r,out}}  = {D_z}\sum\limits_{j \in {{\mathcal{C}}_i}} {\dot z_j^{r,out}} 
    \label{Eqn:PsiMap}
\end{equation}
The second equality is a result of the application of generalized Tellegen's theorem \cite{ilic2018multi,penfield1970generalized}. 
Incorporating such specifications in the control design enables the component to generate or absorb energy at a rate equal to the one that the rest of the system injects into the component. 
Such an objective, when incorporated by all components in the system, 
leads to the components interactively achieving consensus on power-sharing, thereby ensuring a feasible interconnection. 
However, further investigation is needed to ensure the existence and reachability of a feasible equilibrium through proper control design. 
In particular, we have only addressed the fourth objective stated in problem (P1) by defining a consistent map $\psi_i$. 
We focus on the second and third objectives in Section \ref{Sec:Control}, and the first objective in Section \ref{Sec:Feasibility}. 

Before formalizing the claims in the rest of the paper, we make the following assumptions that typically hold in electric energy systems. 
\begin{assumption}
	The interactive model of the sub-system $\Sigma_i$ as defined in Fig. \ref{CptModel_EnergySpaceInteractive} has sufficient smoothness conditions to make the sub-system well defined. 
	More precisely, for any $\tilde{x}_{i,0}$ and admissible pair of $\dot{z}_i^{m}(t), \dot{z}_i^{r,\rm{in}}(t)$, there exists at least one  solution $\tilde{x}_i(t) \forall t \in [0, \infty)$ so that the output of interest $y_{z,i}$ is locally square integrable.
	\label{Assum_Integration}
\end{assumption}
\begin{assumption}
	The extended state space of $\Sigma_i$ is reachable from origin. More precisely, given any $\tilde{x}_i$ and $t_1$, there exists a $t_0 \le t_1$ and an admissible incoming interaction $\dot{z}_i^{r,in}(t)$ and local disturbances $\dot{z}_i^ m$ such that the extended states can be driven from $[x_i(t_0), r_i(t_0)] = {0}$ to $[x_i(t_1), r_i(t_1)] = [x_i, r_i]$
	\label{Assum_Reach}
\end{assumption}
\begin{assumption}
	The stored energy in tangent space $E_{t,i}$ as stated  in Definition \ref{Defn:StoredEnergyTangent} is 
	positive at all times. 
	\label{Assum_Energy}
\end{assumption}
\begin{assumption}
    The stored energy function $E_i(x_i)$ is a positive definite function and 
	the time constant $\tau_i$ as defined in Definition \ref{Defn:TimeConstant} is assumed positive at all times.
	\label{Assum_TimeConstant}
\end{assumption}
In Assumption \ref{Assum_Integration}, the admissible inputs refer to the class of inputs belonging to a set of finite energy signals, i.e. they must remain square integreable.
We can also characterize the set with other additional constraints such as those corresponding to physical equipment protection limits. 
Assumptions \ref{Assum_Energy} and \ref{Assum_TimeConstant} mean that the components' inherent inertia and damping (See Definitions \ref{Defn:StoredEnergyTangent} and \ref{Defn:TimeConstant}) are positive which is true for most physical systems.

\section{Interactive component-level control}
\label{Sec:Control}
Given a consistent reference value for the output variable, map based on local measurements such as the one obtained by the mapping Eqn. \eqref{Eqn:PsiMap}, we propose a two-layered control that goes hand-in-hand with the interactive models proposed in Section \ref{Sec:Modeling}, to satisfy the control objectives stated in (P1).
By utilizing the energy space model, we first design the control in higher layer to ensure feasibility by ensuring the tracking error $y_{z,i} - y_{z,i}^{ref}$ is driven to zero. 
In this linear energy space layer, we treat $u_{z,i} = \dot{Q}_i^u$ as 
the virtual control input. 

\begin{figure}[ht]
	\centering 
	{\includegraphics[width=1.05\linewidth]{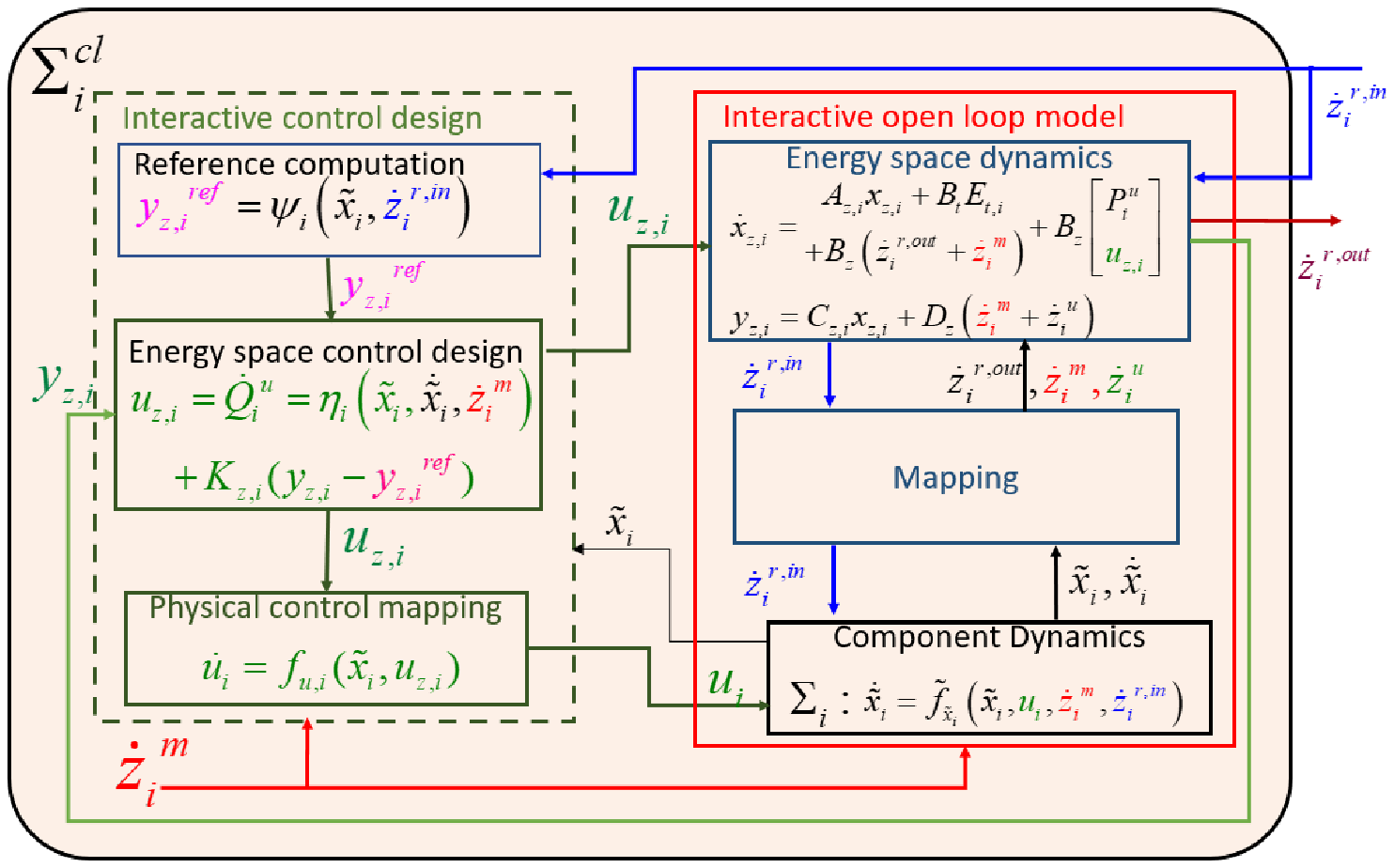}}
	\caption{Block diagram of closed loop module formed by component-level distributed interactive control design satisfying objectives stated in control problem (P1).}
	\label{CptCtrl_EnergySpaceStabCtrl}
\end{figure}
This design is mapped to physical state space 
by utilizing the Definition \ref{Defn:ReacPower}. 
We can design a dynamic controller to be implemented as in Eqn. \eqref{Eqn:ControlImp}. Corresponding dynamic map is abstracted through the function $f_i^u$ also utilized in the control implementation block in Fig. \ref{CptCtrl_EnergySpaceStabCtrl}. 
\begin{equation}
    \dot{u}_i = \frac{\dot{w}_i^u}{w_i^u} u_i - \frac{1}{w_i^u}\dot{Q}^u_{i} = f_i^u(\tilde{x}_i, u_{z,i})
    \label{Eqn:ControlImp}
\end{equation}
Note that the map $f_i^u$ is well defined since $w_i^u$ is non-zero for non-zero higher layer design variables $\dot{z}_i^u = \left[P_i^u, \dot{Q}_i^u\right]$. 
A schematic of the proposed design in shown in Fig. \ref{CptCtrl_EnergySpaceStabCtrl}. 

The proposed two-layered design can also be interpreted as an input-output linearization based control. 
In particular, consider the proposed output variable in Eqn. \eqref{Eqn:IntModelOutput_v1}.
By taking its time derivative and plugging in the relations given by the interaction model in Eqn. \eqref{Eqn:IntModelState}, we have the dynamic input-output relation between the second component of input $\dot{z}_i^u$ and the chosen output $y_{z,i}$ as shown in Eqn. \eqref{Eqn:NormalFOrm}.
\begin{subequations}
\begin{align}
&{{{\dot y}_{z,i}} = - 4{E_{t,i}} +\underbrace {\left( {  {{\dot P}_i}^{r,out} + {{\dot Q}_i}^{r,out} + \dot Q_i^m} \right)}_{{\eta _i}({{\tilde x}_i},{\dot{\tilde{x}}_i},\dot z_i^m)} + \underbrace {\dot Q_i^u}_{{u_{z,i}}}} \label{Eqn:NormalFormOutput}\\
&{{\dot{\tilde x}}_i} = {{\tilde f}_{{{\tilde x}_i}}}({{\tilde x}_i}) + \tilde g_{{{\tilde x}_i}}^m({{\tilde x}_i}){{\dot z}_i}^m + \tilde g_{{{\tilde x}_i}}^r({{\tilde x}_i}){{\dot z}_i}^{r,in} \notag\\
&{\qquad \qquad  + \tilde g_{{{\tilde x}_i}}^u({{\tilde x}_i}){U_i}({{\tilde x}_i},{P_i}^m,{y_{z,i}})} \label{Eqn:NormalFormInt}
\end{align}
\label{Eqn:NormalFOrm}
\end{subequations}
Here, the map $U_i(x_i, P_i^m, y_{z,i}) \rightarrow u_i$ is defined as in Eqn. \eqref{Eqn:ControlMapExp} to obtain physical control $u_i$ given the output variable in energy space. 
\begin{equation}
    {u_i = \frac{1}{{w_i^u}}\left( {\frac{1}{2}{x_i}^T{B_i}({x_i}){x_i} - {y_{z,i}} - {m_i}w_i^m} \right)}
    \label{Eqn:ControlMapExp}
\end{equation}
The input-output model in Eqn. \eqref{Eqn:NormalFOrm} has a relative degree equal to one since the first time derivative of the output is directly dependent on the control in energy space $u_{z,i} = \dot{Q}_i^u$. 
The rest of the state variables characterizing the internal dynamics, are not directly affected by  the virtual control in energy space $u_{z,i}$. 
They are only affected by physical control $u_i$, which is re-expressed through the map $U_i(x_i, P_i^m, y_{z,i})$. 

The normal form model is useful in direct control of outputs of interest, the usage of which
is further warranted by the fact that it is diffeomorphic to the model in Eqn. \eqref{Eqn:StateGeneralDyn} and \eqref{Eqn:ControlImp} \cite{jaddivada2020unified}. 
We next propose two different energy-space control design methods, with qualitatively different implications on the performance. 

\subsection{Feedback linearizing control}
In order to ensure the tracking objective on $y_{z,i}$ following $y_{z,i}^{ref}$,
 we propose control design in energy space as: 
\begin{equation}
    u_{z,i}=  -{\hat{\eta} _i} - K_i(y_{z,i} - y_{z,i}^{ref}) + \dot{y}_{z,i}^{ref}
    \label{Eqn:FBLC}
\end{equation}
Here $K_i$ is a positive constant and 
$\hat{\eta}_i$ is an estimate of  the $\eta_i$ defined in Eqn. \eqref{Eqn:NormalFormOutput}, which undoubtedly is a complex design task.
However, although the definition of terms in use for 
$\eta_i$ involve higher order terms in conventional state space, its estimate can be obtained using the linear energy state space model of Eqn. \eqref{Eqn_IntModelLinear}. This discussion is however out of scope of this paper and preliminary work on this topic can be found in \cite{bharadwaj2021measurement}. For the purposes of this paper, we assume that an accurate estimate of $\eta_i$ can be obtained.
Substituting the control design of Eqn. \eqref{Eqn:FBLC} into Eqn. \eqref{Eqn:NormalFormOutput}, we obtain in closed loop
\begin{equation}
    \frac{d}{dt}\left(y_{z,i} - y_{z,i}^{ref}\right) = -K_i\left(y_{z,i} - y_{z,i}^{ref}\right) + \left(\tilde{\eta}_i- 4E_{t,i}\right)
    \label{Eqn:OutputClosed}
\end{equation}
Here $\tilde \eta_i = \eta_i - \hat{\eta}_i$ is the difference between the physical and measured nonlinearities that are utilized in the higher layer design in energy space.  
From the output response in Eqn. \eqref{Eqn:OutputClosed}, 
we establish 
sufficient stability conditions for the closed loop component models. In order to do so, we need to introduce the notion of dissipativity in the extended state space. 
The definition in \cite{kawano2020krasovskii} has been revised to include the effect of control inputs in extended state space designed using the interactive energy space models introduced in Fig. \ref{Eqn:InteractiveStdAloneModel}. 
\\
\begin{definition}
\textit{(Dissipative through feedback)}
The system in Eqn. \eqref{Eqn:NormalFOrm} is said to be dissipative through feedback with respect to supply rate $s_i$ if there exists a feedback control $u_{z,i}$
and a storage function $S_i$ satisfying
\begin{equation}
\begin{array}{l}
\frac{d{S_i}}{dt}({{\tilde x}_i}(t))  \le {{s_i}\left({y_{z,i}}({{\tilde x}_i}(t)), {\dot{z}_i^{r,in}(t)} \right)} 
\end{array}
\label{Eqn:DiffDissip}
\end{equation}
for all $t \ge 0$ and all $\left(\tilde{x}_i, \dot{z}_i^{r,in}\right) \in \mathcal{\tilde{X}}_i \times \mathcal{Z}_i$  where $\mathcal{\tilde{X}}_i$ and $\mathcal{Z}_i$ represents the extended state space and the incoming interaction variable manifolds respectively. 
\end{definition}$~$
\begin{theorem}(Performance with FBLC control)\\
    \begin{subequations}
    Consider the virtual control $u_{z,i}$ designed as in Eqn. \eqref{Eqn:FBLC}.
    Under assumptions \ref{Assum_Integration} - \ref{Assum_Energy},
    the closed loop model formed by equations
    \eqref{Eqn:NormalFOrm}, \eqref{Eqn:FBLC}
    establish following properties:
    \begin{enumerate}
	    \item 
	    It is dissipative through feedback w.r.t a supply rate
	    \begin{equation}
	    {s_i}\left( {{y_{z,i}},{{\dot z}_i}^{r,in}} \right) =  - K_i \left( {{y_{z,i}} - y_{z,i}^{ref}} \right)  + \dot{P}_i^{r,in} + \dot{Q}_i^{r,in}
	    \label{Eqn:Dissipativity}
	    \end{equation}
	    if $\tilde{\eta}_i \le \dot{Q}_i^{r,in}$
	    where $y_{z,i}$ is a function of extended state space as defined in Eqn. \eqref{Eqn:IntModelOutput_v1} and 
	    $y_{z,i}^{ref} = P_i^{r,in}$ which is one of the elements of $\dot{z}_i^{r,in}$. 
	    \item 
	     Assume there exists a non-empty equilibrium set defined as
	      $   \left\{\tilde{x}_i^* \in \mathcal{\tilde{X}}_i  |y_{z,i}\left(\tilde{x}^*_i\right) = P_i^{r,in}\right\}$
	    Any point in this set is asymptotically stable if
	    \begin{equation}
	    \left|\tilde{\eta}_i - 4E_{t,i}\right| \le  K_i\left|y_{z,i} - y_{z,i}^{ref}\right| 
	    \label{Eqn:SuffStabCond}
	    \end{equation}
    \end{enumerate}
	\end{subequations}
    \label{Theorem:CtrlFBLC}
\end{theorem}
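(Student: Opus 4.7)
The plan is to handle the two parts separately, with both arguments funneled through the closed-loop output-error equation \eqref{Eqn:OutputClosed}, which already collapses the complexity of the extended state dynamics into a single scalar ODE for $y_{z,i}$.

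For part (1), I would take the storage function to be $S_i(\tilde{x}_i)=y_{z,i}(\tilde{x}_i)$, regarded as a function of the extended state via \eqref{Eqn:IntModelOutput_v1}, and note that by the feasibility map \eqref{Eqn:PsiMap} the reference satisfies $y_{z,i}^{ref}=P_i^{r,in}$ so $\dot y_{z,i}^{ref}=\dot P_i^{r,in}$. Substituting \eqref{Eqn:FBLC} into \eqref{Eqn:NormalFormOutput} then gives
\[
\dot S_i \;=\; -K_i\bigl(y_{z,i}-y_{z,i}^{ref}\bigr)+\dot P_i^{r,in}+\bigl(\tilde\eta_i-4E_{t,i}\bigr).
\]
Assumption \ref{Assum_Energy} supplies $-4E_{t,i}\le 0$, so the hypothesis $\tilde\eta_i\le\dot Q_i^{r,in}$ yields $\tilde\eta_i-4E_{t,i}\le\dot Q_i^{r,in}$, from which $\dot S_i\le s_i$ with $s_i$ as in \eqref{Eqn:Dissipativity}. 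This establishes dissipativity through feedback in the sense of Definition~3.

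For part (2), I would use the quadratic Lyapunov candidate $V_i=\tfrac12(y_{z,i}-y_{z,i}^{ref})^2$, which is positive definite with respect to the stated equilibrium set in the $y_{z,i}$ coordinate. Differentiating along \eqref{Eqn:OutputClosed} and writing $e_i := y_{z,i}-y_{z,i}^{ref}$ gives
\[
\dot V_i \;=\; -K_i e_i^2 + e_i\bigl(\tilde\eta_i-4E_{t,i}\bigr),
\]
and the sufficient condition \eqref{Eqn:SuffStabCond} combined with $|e_i\,\alpha|\le|e_i|\,|\alpha|$ bounds the cross term by $K_i e_i^2$, yielding $\dot V_i\le 0$. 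Lyapunov stability of the equilibrium set then follows, and a LaSalle argument (using that on the level set $\dot V_i=0$ either $e_i=0$ or the sign condition forces $e_i\to 0$ along trajectories) promotes this to asymptotic convergence of $y_{z,i}$ to $P_i^{r,in}$. Assumptions \ref{Assum_Integration}--\ref{Assum_Energy} are invoked to guarantee existence of trajectories and boundedness of $E_{t,i}$ along them so that the Lyapunov computation is legitimate.

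The main obstacle I anticipate is upgrading output convergence to genuine extended-state asymptotic stability of the claimed equilibrium $\tilde x_i^*$, since \eqref{Eqn:NormalFOrm} is of relative degree one and carries a nontrivial internal (zero) dynamics in $\tilde x_i$ driven by the map $U_i(\tilde x_i,P_i^m,y_{z,i})$ of \eqref{Eqn:ControlMapExp}. Once $y_{z,i}\to y_{z,i}^{ref}$, one still has to argue that $\tilde x_i$ does not drift along the zero-output manifold. Here I would lean on Assumption \ref{Assum_TimeConstant} (positive time constant $\tau_i$, which makes $A_{z,i}$ in \eqref{Eqn_IntModelLinear} Hurwitz along the energy coordinate) together with the diffeomorphism between the normal form and \eqref{Eqn:StateGeneralDyn}--\eqref{Eqn:ControlImp} cited after \eqref{Eqn:ControlMapExp} to set up a cascade argument: the linear higher-layer energy model acts as an input-to-state-stable driven subsystem with vanishing input $e_i\to 0$, and the diffeomorphic lower layer inherits convergence from it, so the equilibrium point in $\tilde{\mathcal X}_i$ is asymptotically stable whenever the equilibrium set is nonempty.
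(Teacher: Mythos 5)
Your proposal is correct and follows essentially the same route as the paper: part (1) uses the identical storage function $S_i = y_{z,i}$ with the bound $\tilde\eta_i \le \dot Q_i^{r,in}$ and $E_{t,i}\ge 0$, and part (2) is the paper's Lyapunov--LaSalle argument with the quadratic candidate $\tfrac12(y_{z,i}-y_{z,i}^{ref})^2$ in place of the paper's $\left|y_{z,i}-y_{z,i}^{ref}\right|$, which is an immaterial variation. Your closing concern about the internal (zero) dynamics is a fair observation, but the paper itself only handles this informally (bounded trajectories are assumed and the closed loop is taken to be dominated by the output dynamics), so no gap relative to the paper's own proof.
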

\begin{proof}
The proof is detailed in Appendix \ref{Sec:CtrlFBLCProof}. 
\end{proof}

It should be noted that the equilibrium set is unknown and is itself possibly time-varying. As a result, we do not assume stored energy in tangent space is zero. 
The sufficienet stability condition in Eqn. \eqref{Eqn:SuffStabCond} indicates that the nonlinearity cancellation error needs to be upper bounded by the rate at which output tracking is performed. The stored energy in tangent space further aids in satisfaction of this constraint. 
Upon satisfaction of such condition, the internal dynamics of the input-output linearized model in Eqn. \eqref{Eqn:NormalFormInt} for $y_{z,i} = y_{z,i}^{ref}$ can be deemed to be stable since bounded state trajectories are assumed. 
The overall performance is thereby dominated by the 
feedback linearizing output in Eqn. \eqref{Eqn:OutputClosed}. We can conclude that even for time-varying disturbances, the tracking error decays exponentially.  


\subsection{Sliding mode control}
For provable performance, the feedback linearizing control requires exact cancellation of the nonlinear term $\eta_i$. Furthermore, obtaining the measurements needed for the computation of $\eta_i$ can be difficult in general. 
We thus propose a sliding mode equivalent control
as shown in Eqn. \eqref{Eqn:SMC}, where only bound  $\left|\eta_i - 4E_{t,i} \right|\le \overline{L}_i$ over a pre-specified time needs to be known.   
\begin{equation}
    \dot{Q}_i^u = -\underbrace{\left(\overline{L}_i + K_i\right)}_{\alpha_i} sign\left(y_{z,i}- y_{z,i}^{ref}\right) +\dot{y}_{z,i}^{ref}
        \label{Eqn:SMC}
    \end{equation}
Here, the sliding surface is the tracking error $\left(y_{z,i}- y_{z,i}^{ref}\right)$ and $K_i$ is a positive constant.
By selecting a large enough gain value $\alpha_i$, the sliding mode controller is equivalent to FBLC control \cite{slotine1991applied}. 

\begin{theorem}(Performance with SMC)\\
    \begin{subequations}
    Under assumption \ref{Assum_Integration} - \ref{Assum_Reach} 
    the closed loop model formed by equations \eqref{Eqn:NormalFOrm} and \eqref{Eqn:SMC} 
    establish following properties:
    \begin{enumerate}
	    \item Tracking of output variable $y_{z,i}$ to the reference value $y_{z,i}^{ref} = P_i^{r,in}$ is achieved within a finite reaching time upper bounded by
    $t_r = \frac{{{2}}}{K_i} \left|\sigma_i(0)\right|$. 
    where $\left|\sigma_i(0))\right|$ is the distance of the operating point from the sliding surface and 
    $K_i = \alpha_i - \overline{L}_i$ is a positive constant 
where $\overline{L}_i$ bounds the nonlinearities of Eqn. \eqref{Eqn:NormalFormOutput} as $\left|\eta_i - 4E_{t,i}\right|\le \overline{L}_i$ 
	    \item 
	     Assume there exists a non-empty equilibrium set defined as  
	     $\left\{\tilde{x}_i^* \in \mathcal{\tilde{X}}_i  |y_{z,i}\left(\tilde{x}^*_i\right) = P_i^{r,in}\right\}$. 
	    Any point in this set is asymptotically stable. 
    \end{enumerate}
	\end{subequations}
    \label{Theorem:CtrlSMC}
\end{theorem}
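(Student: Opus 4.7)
The plan is to mirror the standard sliding-mode reaching argument, specialized to the single sliding variable $\sigma_i := y_{z,i} - y_{z,i}^{ref}$. Choose the candidate Lyapunov function $V_i = \tfrac{1}{2}\sigma_i^2$. Substituting the SMC law \eqref{Eqn:SMC} into the output equation \eqref{Eqn:NormalFormOutput} and using $u_{z,i} = \dot Q_i^u$ yields $\dot\sigma_i = (\eta_i - 4E_{t,i}) - \alpha_i\,\mathrm{sign}(\sigma_i)$, so that
\[
\dot V_i \;=\; \sigma_i(\eta_i - 4E_{t,i}) \;-\; \alpha_i|\sigma_i| \;\le\; (\bar L_i - \alpha_i)\,|\sigma_i| \;=\; -K_i|\sigma_i|,
\]
where the inequality uses the hypothesis $|\eta_i - 4E_{t,i}| \le \bar L_i$ together with $\alpha_i = \bar L_i + K_i$. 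This is the classical $\eta$-reaching condition and is strictly negative away from the sliding surface.

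To extract the finite reaching time in claim (1), I would convert the above into a differential inequality for $|\sigma_i|$ directly: for $\sigma_i\ne 0$, $\dot V_i = |\sigma_i|\,\tfrac{d}{dt}|\sigma_i|$, so $\tfrac{d}{dt}|\sigma_i| \le -K_i$. Integrating from $0$ to the first time $t_r$ at which $|\sigma_i(t_r)| = 0$ gives $t_r \le |\sigma_i(0)|/K_i$, which is consistent with the (slightly slack) stated bound $t_r \le 2|\sigma_i(0)|/K_i$. Once the trajectory reaches $\sigma_i = 0$, the reaching condition prevents escape, so perfect tracking $y_{z,i}(t) = P_i^{r,in}$ holds for all $t\ge t_r$ via the usual equivalent-control argument.

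For claim (2), the trajectory is confined to the sliding manifold $\{\tilde x_i : y_{z,i}(\tilde x_i) = P_i^{r,in}\}$ after the finite time $t_r$. I would restrict the extended-state dynamics \eqref{Eqn:NormalFormInt} to this manifold, where the physical input is pinned by the map $u_i = U_i(\tilde x_i, P_i^m, P_i^{r,in})$ of \eqref{Eqn:ControlMapExp}. The non-emptiness of the equilibrium set, combined with Assumption \ref{Assum_Integration} (well-posedness, square-integrable outputs) and Assumption \ref{Assum_Reach} (reachability in the extended state space), then lets me conclude that any equilibrium $\tilde x_i^* \in \tilde{\mathcal X}_i$ with $y_{z,i}(\tilde x_i^*) = P_i^{r,in}$ is approached asymptotically.

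The main obstacle is the zero-dynamics step. Because the normal form \eqref{Eqn:NormalFOrm} has relative degree one, driving $\sigma_i$ to zero in finite time does not by itself control the remaining extended-state coordinates, and classical SMC theory needs the internal dynamics on the sliding manifold to be stable as a separate hypothesis. I would lean on the linear structure of the energy-space model \eqref{Eqn_IntModelLinear} (in particular the Hurwitz character of $A_{z,i}$ for positive $\tau_i$) and the diffeomorphism between the energy-space and physical state-space coordinates to argue that no hidden unbounded mode is possible, and then use Assumption \ref{Assum_Reach} to guarantee that the manifold-entry point can actually be steered to $\tilde x_i^*$, closing the asymptotic-stability argument.
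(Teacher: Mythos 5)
For claim (1) your argument is essentially the paper's own: the paper also takes $S_i=\tfrac12\sigma_i^2$ with $\sigma_i=y_{z,i}-y_{z,i}^{ref}$, substitutes \eqref{Eqn:NormalFormOutput} and \eqref{Eqn:SMC}, and obtains the same reaching inequality $\dot S_i\le |\sigma_i|\overline{L}_i-\alpha_i|\sigma_i|=-K_i|\sigma_i|$; it then integrates $S_i^{-1/2}\dot S_i$ to get $|\sigma_i(t)|\le|\sigma_i(0)|-\tfrac{K_i}{\sqrt2}t$ and the reaching time $\tfrac{\sqrt2}{K_i}|\sigma_i(0)|$, whereas you integrate $\tfrac{d}{dt}|\sigma_i|\le-K_i$ and get $\tfrac{1}{K_i}|\sigma_i(0)|$. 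Both are compatible with (indeed tighter than) the stated bound $\tfrac{2}{K_i}|\sigma_i(0)|$, so this part is correct and takes the same route.

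For claim (2) you depart from the paper, and your proposed completion has genuine gaps. The paper's proof is simply a LaSalle argument on the same storage function: since $\dot S_i\le-K_i|\sigma_i|$ is negative semidefinite and vanishes only on $\left\{\tilde x_i\in\mathcal{\tilde X}_i\,:\,y_{z,i}(\tilde x_i)=P_i^{r,in}\right\}$, LaSalle's invariance principle is invoked to conclude asymptotic stability of that set; the internal (zero) dynamics are never analyzed in the proof (they are only addressed informally in the FBLC discussion by assuming bounded state trajectories). Your observation that relative degree one leaves the remaining extended-state coordinates uncontrolled is a legitimate concern, but the way you try to close it does not hold up: $A_{z,i}=\left[\begin{smallmatrix}-1/\tau_i&0\\0&0\end{smallmatrix}\right]$ has a zero eigenvalue, so it is not Hurwitz but only marginally stable; positivity of $\tau_i$ is Assumption \ref{Assum_TimeConstant}, which is not among the hypotheses of this theorem (only Assumptions \ref{Assum_Integration}--\ref{Assum_Reach} are); and Assumption \ref{Assum_Reach} is an open-loop reachability property (existence of some admissible inputs steering the state), which says nothing about closed-loop attractivity of a particular equilibrium, so it cannot "close" an asymptotic-stability argument. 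To make your route rigorous you would need an explicit stability or detectability-type hypothesis on the dynamics restricted to the sliding manifold, which neither your sketch nor the paper's assumptions supply; alternatively, keep your reaching inequality from part (1) and finish as the paper does, by applying LaSalle's invariance principle to conclude asymptotic stability of the set where $y_{z,i}=P_i^{r,in}$.
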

\begin{proof}
The proof is detailed in Appendix \ref{Sec:CtrlSMCProof}. 
\end{proof}

The finite settling time in sharp contrast to the asymptotic stability result stated earlier is crucial, especially in electric energy systems, since these systems require components to follow higher-layer supervisory control signals within pre-specified time. 
The FBLC and SMC controllers result in qualitatively similar stability results. However, the characterized bound in SMC $\left|\eta_i - 4E_{t,i}\right| \le \overline{L}_i$ is much more conservative than the differential bound $\left|\tilde{\eta}_i - 4E_{t,i}\right| \le 0 $ in use with FBLC, thereby also leading to increased control effort in the case of former. 
At the same time, perfect tracking is achieved with SMC and is robust to model and parameter uncertainties while FBLC only has exponential convergence and is susceptible to tracking errors that can be upper bounded by the measurement uncertainty. 

We have so far solved the problem of distributed stabilization. 
However, the problem (P1) requires regulation of outputs in conventional space $y_i$ that dictate QoS. 
In order to equip this objective, we propose to revise the map $\psi_i$.
Let us consider QoS dictating variable that appears at the interaction port as an example. The following remark illustrates how such regulation objectives can get accommodated while retaining the claims stated in this section. 
\\
\begin{remark}
(Adopting output regulation objective)\\
Consider the reference for output variables in energy space to be utilized in the map $\psi_i$ in Eqn. \eqref{Eqn:PsiMap} in the component-level control, revised as follows:  
\begin{subequations}
\begin{align}
&{y_{z,i}}^{ref} = {P_i}^{r,in}\left( {\frac{{{y_i}^{ref}}}{{{y_i}}}} \right)\\
&{{\dot y}_{z,i}}^{ref} = {{\dot P}_i}^{r,in}\left( {\frac{{{y_i}^{ref}}}{{{y_i}}}} \right) - {P_i}^{r,in}\left( {\frac{{{y_i}^{ref}}}{{{y_i}^2}}\frac{{d{y_i}}}{{dt}}} \right)
\end{align}
    \label{Eqn:CorrollaryRegulation}
\end{subequations}
All the results derived in this section still hold while the terminal variables get regulated to the desired reference value $y_i^{ref}$. 
\label{Corollary:Regulation}
\end{remark}

In this section, we have explained the different control design methods in energy space that can be utilized to solve problem (P1) from the perspective of the stand-alone component. 
Assuming sufficient internal stability conditions are satisfied at each of the component, we next investigate the feasibility of interconnected system shown in Fig. \ref{Fig:Cutset}. 

\section{Interconnection-level feasibility conditions}
\label{Sec:Feasibility}
In this section,  we propose sufficient feasibility conditions that can be checked for in a feed-forward way to ensure feasible interconnection. 
In closed loop, the energy space model representation observes dynamical model as in Eqn. \eqref{Eqn:IntModelEasy}.
	  \begin{align}
	  &\text{Interaction model in closed loop:}\label{Eqn:IntModelEasy}\\
	   &  \dot{x}_{z,i} = A_{z,i} x_{x,i} + B_t E_{t,i} +B_{z} \dot{z}_i^{r,out}~~ x_{z,i}(0) = x_{z,i0} \notag
	 \end{align}
	 The matrices $A_{z,i}, B_t, B_z$ are the same as defined for Eqn. \eqref{Eqn_IntModelLinear}. 
It should be noted that $\dot{z}_i^{r,out}$ and $E_{t,i}$ here are functions of local energy conversion dynamics, the trajectories of which evolve as per the extended space model considered in Eqn. \eqref{Eqn:InteractiveStdAloneModel}, \eqref{Eqn:ControlImp} and \eqref{Eqn:FBLC}. 
However, the extended state trajectories in the closed loop are dictated by the instantaneous power and generalized reactive power entering the disturbance, control, and interaction ports. 
When two or more components are left to interact, the incoming interaction variable drives the system dynamics. It changes the natural equilibrium of the stand-alone component model to that of the interconnected system models. 
However, this dynamic adjustment at the interfaces is contingent upon the existence of interconnected system equilibrium and the dynamical exchange of power across components. 
The cause of the interactions is the incoming interaction variable, while the effect is the outgoing interacting variable as seen by the component. 
Intuitively the set characterizing the effect should remain within the one characterizing the cause, for convergence. 

Let us first characterize the variation of incoming and outgoing interaction variables over a period of time $t\in \left[(k-1)T, kT\right]$ in the sets $\mathcal{Z}_i^{r,in}[k]$ and $\mathcal{Z}_i^{r,out}[k]$ respectively, where $T >> \delta t$ considered in Theorem \ref{Theorem:CtrlFBLC}. 
At each component, we assume that the set $\mathcal{Z}_i^{r,in}[k]$ is given, for which we characterize $\mathcal{Z}_i^{r,out}[k]$ utilizing the map in Eqn. \eqref{Eqn:IntModelCommon}. 
These sets are communicated to neighbors and are utilized as $\mathcal{Z}_j^{r,out}[k+1]$ for the next time period. 
Numerical algorithms to characterize these sets are a topic of future research. In this paper, we assume these sets can be computed ahead of time and are available to each component. 
We first state a general dissipativity result under certain conditions, that the components can check ahead of time in a distributed manner.  
\\
\begin{lemma}(Feasibility of component interconnection)\\
    If the set $\mathcal{Z}_i^{r,out}[k]$ as characterized by the closed loop interactive model of $\Sigma_i$
    $~ \forall z_i^{r,in}(t) \in \mathcal{Z}_i^{r,in}[k]$
    observes the condition $\mathcal{Z}_i^{r,out}[k] \subseteq \mathcal{Z}_i^{r,in}[k]$, 
	 then, $\Sigma_i$ is dissipative 
	with respect to the supply function $\left(\dot{P}_i^{r,\rm{in}}(t) + \dot{Q}_i^{r,\rm{in}}(t)\right) ~ \forall t \in \left[(k-1)T, kT\right]$.
	\label{Theorem_ComponentDissip}
\end{lemma}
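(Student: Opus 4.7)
The plan is to exhibit a storage function $S_i$ on the extended state space and verify a dissipation inequality of the form $\dot{S}_i \le \dot{P}_i^{r,in} + \dot{Q}_i^{r,in}$ along the closed-loop trajectories characterized by Eqn.~\eqref{Eqn:IntModelEasy}. A natural candidate storage function is a positive-definite combination of the energy state variables, for example $S_i(x_{z,i}) = E_i + \tfrac{1}{2}\tau_i p_i^{2}$, which is positive by Assumptions~\ref{Assum_Energy} and~\ref{Assum_TimeConstant}; one can also take simply $S_i = E_i$ and carry tangent-space inertia as a separate non-negative surplus term.

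First, I would differentiate $S_i$ along the closed-loop energy-space dynamics. Using $\dot{x}_{z,i} = A_{z,i}x_{z,i} + B_t E_{t,i} + B_z \dot{z}_i^{r,out}$ together with $\dot{E}_i = p_i$, the derivative decomposes into an intrinsically dissipative part (the $-E_i/\tau_i$ contribution from $A_{z,i}$ and the non-negative $E_{t,i}$ term, both of definite sign by Assumptions~\ref{Assum_Energy} and~\ref{Assum_TimeConstant}) and an input-like part driven linearly by the outgoing interaction rates $\dot{z}_i^{r,out} = [\dot{P}_i^{r,out},\dot{Q}_i^{r,out}]^{T}$ through $B_z$.

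Next, I would invoke the subset condition $\mathcal{Z}_i^{r,out}[k] \subseteq \mathcal{Z}_i^{r,in}[k]$. It asserts that every trajectory of the outgoing interaction variable produced by the closed-loop $\Sigma_i$ over $[(k-1)T,kT]$ lies in the set of admissible incoming trajectories $\mathcal{Z}_i^{r,in}[k]$. Coupled with the generalized Tellegen relation noted after Eqn.~\eqref{Eqn:PsiMap}, this lets me dominate each pointwise pair $(\dot{P}_i^{r,out}(t), \dot{Q}_i^{r,out}(t))$ driving $\dot{S}_i$ by the corresponding pair $(\dot{P}_i^{r,in}(t), \dot{Q}_i^{r,in}(t))$, so that the input-like contribution in $\dot{S}_i$ is upper bounded by $\dot{P}_i^{r,in} + \dot{Q}_i^{r,in}$. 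Absorbing the sign-definite intrinsic dissipation then yields the claimed inequality for all $t\in[(k-1)T,kT]$.

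The main obstacle will be lifting the set-containment condition, which is stated over the whole window $[(k-1)T,kT]$, to a pointwise-in-time dominance of the rates $\dot{z}_i^{r,out}(t)$ by $\dot{z}_i^{r,in}(t)$. I would handle this by working with the cumulative storage $\bar{S}_i(t)=\int_{(k-1)T}^{t}\dot{S}_i(s)\,ds$ and comparing integrals: the set inclusion implies $\int \dot{z}_i^{r,out}\,dt$ is bounded componentwise by $\int \dot{z}_i^{r,in}\,dt$ over the window, which transfers directly to the integrated form of the dissipation inequality and, upon differentiation of a suitably regularized storage, recovers the instantaneous statement required by Definition~1.
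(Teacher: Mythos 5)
Your high-level skeleton (exhibit a storage function, then use the set containment to dominate the interaction-driven terms by the incoming rates) matches the paper's strategy, but the specific storage functions you propose do not work, and the missing ingredient is precisely the algebraic identity that the paper's proof hinges on. From the definition of the outgoing interaction variable in Eqn.~\eqref{Eqn_IntModelInv} (in closed loop, Eqn.~\eqref{Eqn:IntModelEasy}), one has $\dot P_i^{r,out}=\dot p_i+\tfrac{d}{dt}\bigl(E_i/\tau_i\bigr)$ and $\dot Q_i^{r,out}=4E_{t,i}-\dot p_i$, so that in the \emph{sum} the $\dot p_i$ terms cancel: $\dot P_i^{r,out}+\dot Q_i^{r,out}=\tfrac{d}{dt}\bigl(E_i/\tau_i\bigr)+4E_{t,i}$. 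This identity is what singles out the storage function $S_i(t)=\int_0^t 4E_{t,i}(s)\,ds+E_i(t)/\tau_i$ (Eqn.~\eqref{Eqn:FeasTemp}), for which $\dot S_i$ \emph{equals} $\dot P_i^{r,out}+\dot Q_i^{r,out}$, and the containment-derived inequality $\dot P_i^{r,out}+\dot Q_i^{r,out}\le \dot P_i^{r,in}+\dot Q_i^{r,in}$ (Eqn.~\eqref{Eqn_Temp2}) then gives the claim immediately.

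Neither of your candidates delivers this. With $S_i=E_i+\tfrac12\tau_i p_i^2$, the $p$-dynamics contribute $\tau_i p_i\dot p_i$, and since the reactive rate enters the $p$-equation through the second component of $B_z=[1\;-1]^T$, you pick up a term $-\tau_i p_i\,\dot Q_i^{r,out}$ whose coefficient $\tau_i p_i$ is of indefinite sign; the componentwise dominance $\dot Q_i^{r,out}\le\dot Q_i^{r,in}$ then bounds it in the wrong direction whenever $p_i<0$, and even when it does not, you obtain $\tau_i p_i\dot Q_i^{r,in}$ rather than $\dot Q_i^{r,in}$, so the claimed supply rate is not recovered. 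With $S_i=E_i$ you get $\dot S_i=-E_i/\tau_i+\dot P_i^{r,out}\le\dot P_i^{r,in}$, which is dissipativity with respect to $\dot P_i^{r,in}$ alone --- a different statement, not the lemma (since $\dot Q_i^{r,in}$ may be negative, $\dot P_i^{r,in}$ does not dominate $\dot P_i^{r,in}+\dot Q_i^{r,in}$), and it would not support the interconnection-level Lyapunov argument of Theorem~\ref{Corollary_Stability}, which stacks exactly the storage $\int_0^t 4E_{t,i}\,ds+E_i/\tau_i$. Your concern about lifting the windowed set containment to a pointwise rate inequality is legitimate (the paper simply restates the containment as the instantaneous elementwise condition $z_i^{r,out}\preceq z_i^{r,in}$ in Eqn.~\eqref{Eqn:FeasibilityConds} and differentiates), but fixing that does not rescue the proposal: without the cancellation of $\dot p_i$ and the resulting choice of storage, the verification step fails. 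Also, Tellegen's theorem plays no role here; it is only used to relate $\dot z_i^{r,in}$ to the neighbors' outgoing variables in Eqn.~\eqref{Eqn:PsiMap}.
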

\begin{proof}
The proof is elaborated in Appendix \ref{Sec:FeasibilityProof}. 
\end{proof}

Each of the components satisfying Lemma \ref{Theorem_ComponentDissip} implies that there exists an interconnected system equilibrium $x^*(t)$, which need not be known exactly and is also possibly time-varying because of the time-varying disturbances entering the components as shown in Figure \ref{fig:system_general}. The next stated corollary analyses the stability of this interconnected system created by component interactions through memory-less junctions. \\

\begin{theorem}{(Stability of interconnected system)}\\
Given $\mathcal{Z}_i^{r,in}[k]$ at each component $\Sigma_i$ in the system, assume 
$\Sigma_i$ in closed loop at each time $t$ observes the properties stated in Theorem \ref{Theorem:CtrlFBLC}
$\forall {z}_i^{r,in}(t) \in \mathcal{Z}_i^{r,in}[k]$, and satisfies sufficient feasibility conditions stated in Lemma \ref{Theorem_ComponentDissip}. 
Several such components interacting with each other through memory-less junctions result in an interconnected system that is stable in the sense of Lyapunov $\forall t \in \left[(k-1)T, kT\right]$
under additional assumptions \ref{Assum_Energy} and \ref{Assum_TimeConstant}. 
\label{Corollary_Stability}
\end{theorem}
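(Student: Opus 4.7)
The plan is to construct a Lyapunov function for the interconnected system by aggregating the component-level storage functions guaranteed by Lemma \ref{Theorem_ComponentDissip}, and then exploit generalized Tellegen's theorem at the memory-less junctions to cancel the port-flow terms. First, by hypothesis each $\Sigma_i$ in closed loop is dissipative with respect to the supply rate $\dot{P}_i^{r,\mathrm{in}}(t) + \dot{Q}_i^{r,\mathrm{in}}(t)$ on $\left[(k-1)T, kT\right]$, so there exists a storage function $S_i(\tilde{x}_i)$ with $\dot{S}_i \le \dot{P}_i^{r,\mathrm{in}} + \dot{Q}_i^{r,\mathrm{in}}$. Assumptions \ref{Assum_Energy} and \ref{Assum_TimeConstant} guarantee that $E_i(x_i)$ is positive definite and $\tau_i > 0$, so the natural choice of $S_i$ built from $E_i$ and $p_i$ (i.e.\ from $x_{z,i}$) is a valid, positive-definite Lyapunov candidate around the interconnected equilibrium $x^*(t)$ whose existence is supplied by Lemma \ref{Theorem_ComponentDissip}.

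Next I would set $V(t) = \sum_i S_i(\tilde{x}_i(t))$ and differentiate along closed-loop trajectories. Summing the componentwise dissipation inequalities yields
\begin{equation*}
\dot{V}(t) \le \sum_i \bigl(\dot{P}_i^{r,\mathrm{in}}(t) + \dot{Q}_i^{r,\mathrm{in}}(t)\bigr).
\end{equation*}
The key step is to invoke the generalized Tellegen identity already used in Eqn.\ \eqref{Eqn:PsiMap}, which states that at a memory-less junction $\dot{z}_i^{r,\mathrm{in}} = -\sum_{j \in \mathcal{C}_i} \dot{z}_j^{r,\mathrm{out}}$. Summing componentwise over the whole cutset shown in Fig.\ \ref{Fig:Cutset} forces $\sum_i \dot{z}_i^{r,\mathrm{in}}(t) = 0$ identically, hence the right-hand side of the inequality vanishes and $\dot V(t) \le 0$ throughout the interval.

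The main obstacle I expect is reconciling the interval-indexed, set-valued characterization $\mathcal{Z}_i^{r,\mathrm{in}}[k], \mathcal{Z}_i^{r,\mathrm{out}}[k]$ used in Lemma \ref{Theorem_ComponentDissip} with the pointwise-in-time Lyapunov argument above, particularly across the refresh points $t = kT$. I would handle this by noting that the storage functions $S_i$ are continuous in $\tilde{x}_i$ and trajectories are continuous by Assumption \ref{Assum_Integration}, so $V$ does not jump at the boundaries; the set inclusion $\mathcal{Z}_i^{r,\mathrm{out}}[k] \subseteq \mathcal{Z}_i^{r,\mathrm{in}}[k]$ is what makes the supply rate a legitimate local bound pointwise in $t$. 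A secondary subtlety is that $x^*(t)$ is itself time-varying due to the persistent disturbances $m_i$; here I would work with the incremental storage $S_i(\tilde{x}_i) - S_i(\tilde{x}_i^*(t))$ and appeal to Remark \ref{Corollary:Regulation} together with Theorem \ref{Theorem:CtrlFBLC} so that the tracking error $y_{z,i} - y_{z,i}^{ref}$ drives the component to the moving equilibrium set.

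Finally, combining $V \ge 0$ (from Assumptions \ref{Assum_Energy}--\ref{Assum_TimeConstant}) with $\dot{V} \le 0$ on $\left[(k-1)T, kT\right]$ yields Lyapunov stability in the stated sense. It is worth flagging that asymptotic stability is not asserted: the Tellegen cancellation produces equality rather than strict decrease whenever components continue to exchange power through the junctions, so LaSalle-type arguments would be needed to upgrade the conclusion, and this is not pursued here.
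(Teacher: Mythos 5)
Your overall strategy (aggregate the component storage functions supplied by Lemma \ref{Theorem_ComponentDissip}, then use the junction relation to dispose of the supply terms) matches the paper's, but your key cancellation step has a genuine gap. You claim that summing the Tellegen relation $\dot z_i^{r,\mathrm{in}} = -\sum_{j\in\mathcal{C}_i}\dot z_j^{r,\mathrm{out}}$ over all components forces $\sum_i \dot z_i^{r,\mathrm{in}}(t)=0$ identically. It does not: the sum equals $-\sum_j \deg(j)\,\dot z_j^{r,\mathrm{out}}(t)$, where $\deg(j)=|\{i: j\in\mathcal{C}_i\}|$, and this has no definite sign. It would vanish identically only if the outgoing interaction rates themselves summed to zero, which (already in the two-component case) amounts to $\dot z_i^{r,\mathrm{in}}=\dot z_i^{r,\mathrm{out}}$ for every $i$ -- exactly the alignment the control is trying to enforce, not an identity; the paper's premise is precisely that outgoing and incoming interactions can mismatch transiently. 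So with $V=\sum_i S_i$ you are left with $\dot V \le -\sum_j \deg(j)\bigl(\dot P_j^{r,\mathrm{out}}+\dot Q_j^{r,\mathrm{out}}\bigr)$, which is not obviously nonpositive, and the argument stalls.

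The paper closes this by using an exact identity you did not invoke: for $S_i(t)=\int_0^t 4E_{t,i}(s)\,ds + E_i(t)/\tau_i$ one has $\dot S_i = \dot P_i^{r,\mathrm{out}}+\dot Q_i^{r,\mathrm{out}}$ (this is how Eqn. \eqref{Eqn:FeasTemp} is obtained), so the right-hand side of the summed dissipation inequality, $-\mathbf{1}^T\mathbf{L}\bigl(\dot{\mathbf P}^{r,\mathrm{out}}+\dot{\mathbf Q}^{r,\mathrm{out}}\bigr)$, is itself $-\mathbf{1}^T\mathbf{L}\dot{\mathbf S}$ and can be moved to the left, giving $\frac{d}{dt}\bigl[\mathbf{1}^T(\mathbf I+\mathbf L)\mathbf S\bigr]\le 0$. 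The Lyapunov function is therefore the degree-weighted sum $\sum_i\bigl(1+\deg(i)\bigr)S_i$, positive definite by Assumptions \ref{Assum_Energy} and \ref{Assum_TimeConstant}, rather than the plain sum, and no claim that the net incoming supply vanishes is needed. Your observations about continuity of $V$ across $t=kT$ and incremental storage around the moving equilibrium are reasonable but peripheral; the missing ingredient is the identity $\dot S_i=\dot P_i^{r,\mathrm{out}}+\dot Q_i^{r,\mathrm{out}}$ and the resulting weighted storage.
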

\begin{proof}
The proof is provided in Appendix \ref{Sec:StabilityProof}, 
\end{proof}

The conditions in Lemma \ref{Theorem_ComponentDissip} are consistent from the perspective of both component-level and system-level. 
They are intended to be checked for in a feed-forward way, which further can be accommodated by a look-ahead centralized controller with dynamical energy space constraints \cite{ilic2020unified} for optimal trajectory planning. However, such an extension is out of the scope of this paper.  

\section{Example of an RLC circuit supplying given power load}
\label{Sec:RLCCkt}
In this section, we take an example of an RLC circuit with a controllable voltage source supplying a given power value as shown in Fig. \ref{fig:system_generalRLC}. This system can be interpreted as two sub-systems interacting similar to the one shown in Fig. \ref{fig:system_general}, where $x_1 = \left[i_1, v_1\right]$ represent the local state variables of source sub-system $\Sigma_1$, $R_1, L_1, C_1$ respectively represent the resistance, inductance and capacitance of controllable sub-system $\Sigma_1$. 
We treat $\Sigma_2$ in this exercise as a black box that is absorbing a given amount of power $P_{l,2}$. 
Since the dynamics of the second component are not modeled, the outgoing interaction variable of $\Sigma_2$ is $\dot{z}_2^{r,out} = \left[{P_{l,2}, \dot{Q}_{l,2}}\right]^T$ where $\dot{Q}_{l,2}$ is the reactive power rate absorbed by the load in order to consume given amount of $P_{l,2}$. 
The incoming rate of interaction variable into $\Sigma_1$ upon interconnection with $\Sigma_2$ is equal to the negative of the rate of outgoing interaction variables of $\Sigma_2$.   
Furthermore, as seen by the stand-alone component $\Sigma_1$, $\dot{z}_1^{r,in}$ can also be interpreted as the disturbance $\dot{z}_1^{m,u} = \left[P_1^{m,u}, \dot{Q}_1^{m,u}\right]^T$ entering $\Sigma_1$.

Here $P_i^{m,u}$ is the disturbance that enters $\Sigma_1$ or the interaction as a result of energy conversion dynamics of $\Sigma_2$. 
The circuit may have additional local disturbances entering near the control port denoted as $P_1^{m,m}$. 
The suffixes $m$ and $u$ indicate the matched and unmatched counterpart of disturbances as defined in \cite{guo2017output}. 
Our proposed controller can stabilize disturbances at both matched and unmatched ports. To restict the scope of this paper, we have considered only the unmatched disturbances in this paper. 
\begin{figure}[!htbp]
\begin{center}
\includegraphics[width=1.0\linewidth]{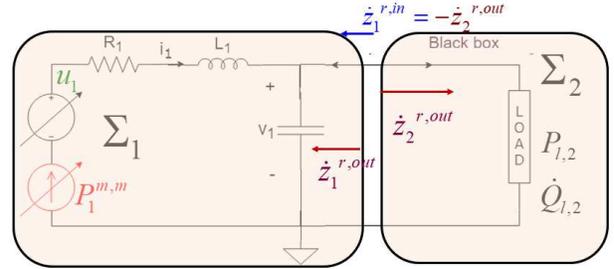}
\caption{RLC circuit with local controllable voltage input $u_1$ ($\Sigma_1$) supplying constant power to a black box with unknown internal dynamics ($\Sigma_2$) when subject to its own local disturbance $P_1^{m,m}$, where the power drawn by $\Sigma_2$, $P_{l,2}$ is interpreted as another disturbance $P_1^{m,u}$ as seen by $\Sigma_1$. }
\label{fig:system_generalRLC}
\end{center}
\end{figure}
The model in standard state space form is given in Eqn. \eqref{Eqn:StateGeneralDyn} for the $\Sigma_1$ is given below: 
\begin{equation}
\begin{array}{l}
\frac{d}{{dt}}\underbrace {\left[ \begin{array}{l}
{i_1}\\
{v_1}
\end{array} \right]}_{{x_1}} = \underbrace {\left[ {\begin{array}{*{20}{c}}
{ - \frac{{{R_1}}}{{{L_1}}}}&{ - \frac{1}{{{L_1}}}}\\
{\frac{1}{{{C_1}}}}&0
\end{array}} \right]\left[ \begin{array}{l}
{i_1}\\
{v_1}
\end{array} \right]}_{{f_{x,1}}({x_1})} + \underbrace {\left[ \begin{array}{l}
\frac{1}{{{L_1}}}\\
0
\end{array} \right]}_{{g_1}^u({x_1})}{u_1} + \\
\qquad \qquad \qquad \qquad \underbrace {\left[ {\begin{array}{*{20}{c}}
{\frac{1}{{{L_1}{i_1}}}}&0\\
0&{\frac{{ - 1}}{{{C_1}{v_1}}}}
\end{array}} \right]}_{{g_1}^m({x_1})}\underbrace {\left[ \begin{array}{l}
{P_1}^{m,m}\\
{P_1}^{m,u}
\end{array} \right]}_{{m_1}}
\end{array}
\label{Eqn:RLCEqn}
\end{equation}
As per the problem posing in (P1) it is desired for the control input $u_1$ to be designed using local measurements to achieve interconnected system stability. Optionally, we required the chosen 
local output variable $y_1 = v_1$ to be regulated to a reference value of $80$ V. 
In this section, we compare our proposed multi-layered distributed control to several other linear and nonlinear methods pursued in the literature.  
Throughout the analysis, we consider 
$R_1 = 10 m\Omega , L_1 = 1.12 mH, C_1 = 6.8 mF$. 

\subsection{Constant power load}
With the interconnection relations, we have $P_1^{m,u} = -P_l = -1.2 kW$.
We first test a conventional proportional control designed as follows: 
\begin{equation}
    u_1 ={u_1^{ref}}  - K_1 i - K_2 \left( v- v^{ref}\right) 
    \label{Eqn:CtrlCG} 
\end{equation}
Here $u_1^{ref} = {v_1^{ref} + R_1 \frac{P_l}{v_1^{ref}}}$ is the consistent input applied at equilibrium. 
The gains $K_1 =0.4512$ and $K_2 = 0.45$ are the Linear Quadratic Regulator (LQR) tuned gains of the linearized system matrices around initial conditions $i_{1,0} = 1A, v_{1,0} = 80 V$. 
Such control results in a stable response as shown in blue in Fig. \ref{Fig:EnergySim3}. 
Notice that implementing this control can be difficult because of having to know the exact reference values and the parameter $R_1$. 

\begin{figure*}[!htbp]
	\centering
	\subfigure[Current through inductor]{\includegraphics[width=0.3\linewidth]{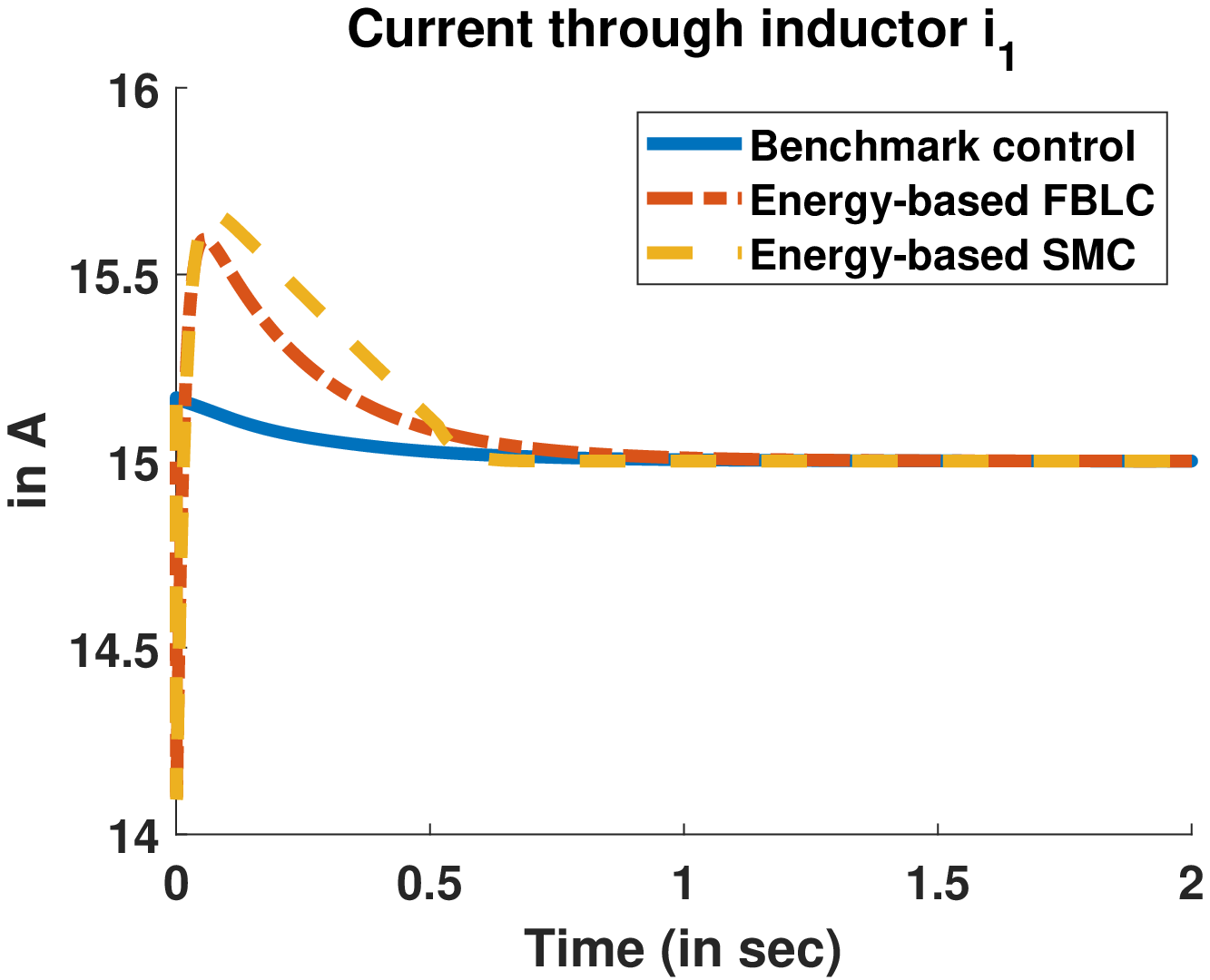}} \label{Fig_Sim3}
	\subfigure[Voltage across capacitor]{\includegraphics[width=0.3\linewidth]{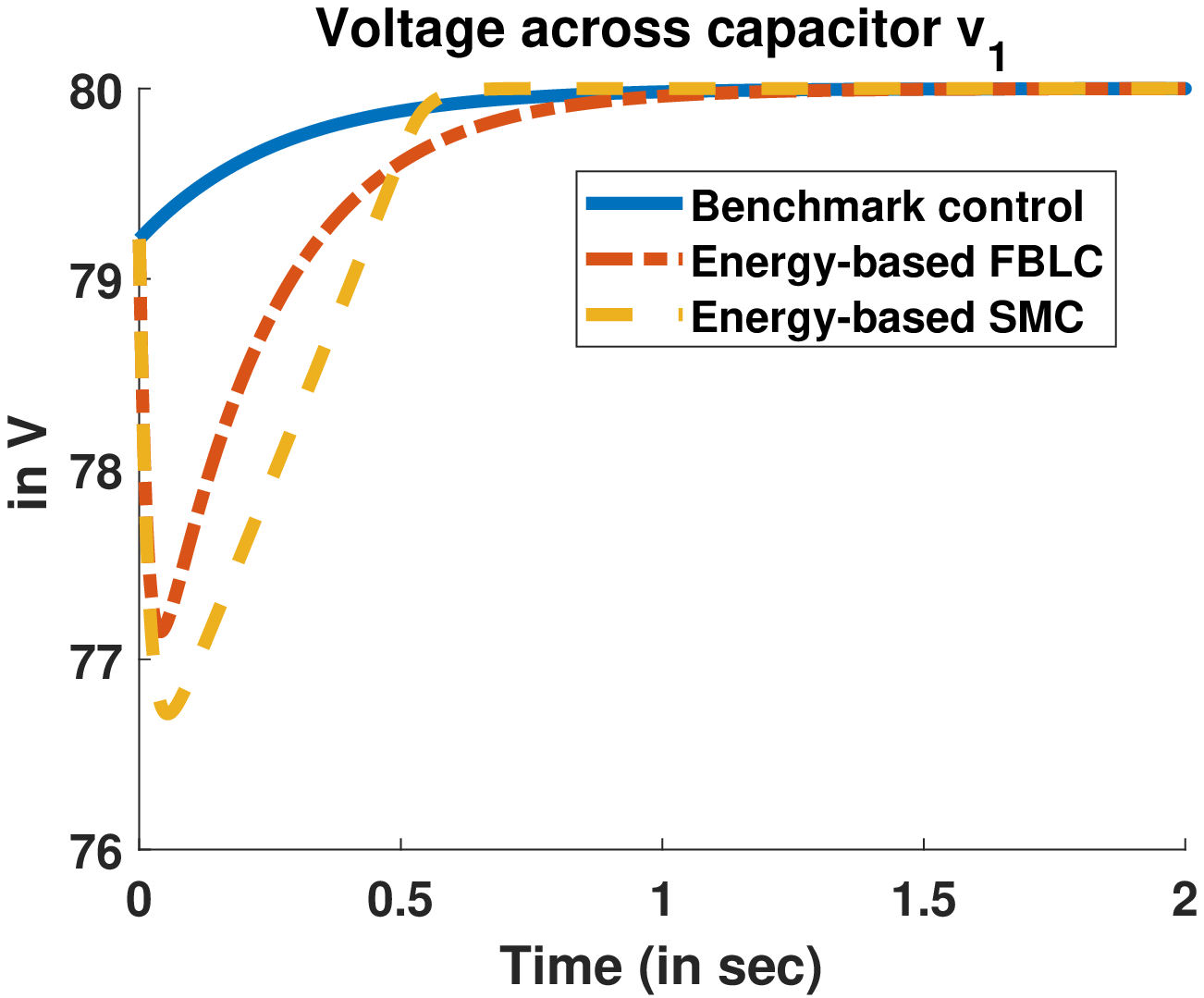}\label{Fig_Voltage_Sim3} }
	\subfigure[Applied control voltage]{\includegraphics[width=0.3\linewidth]{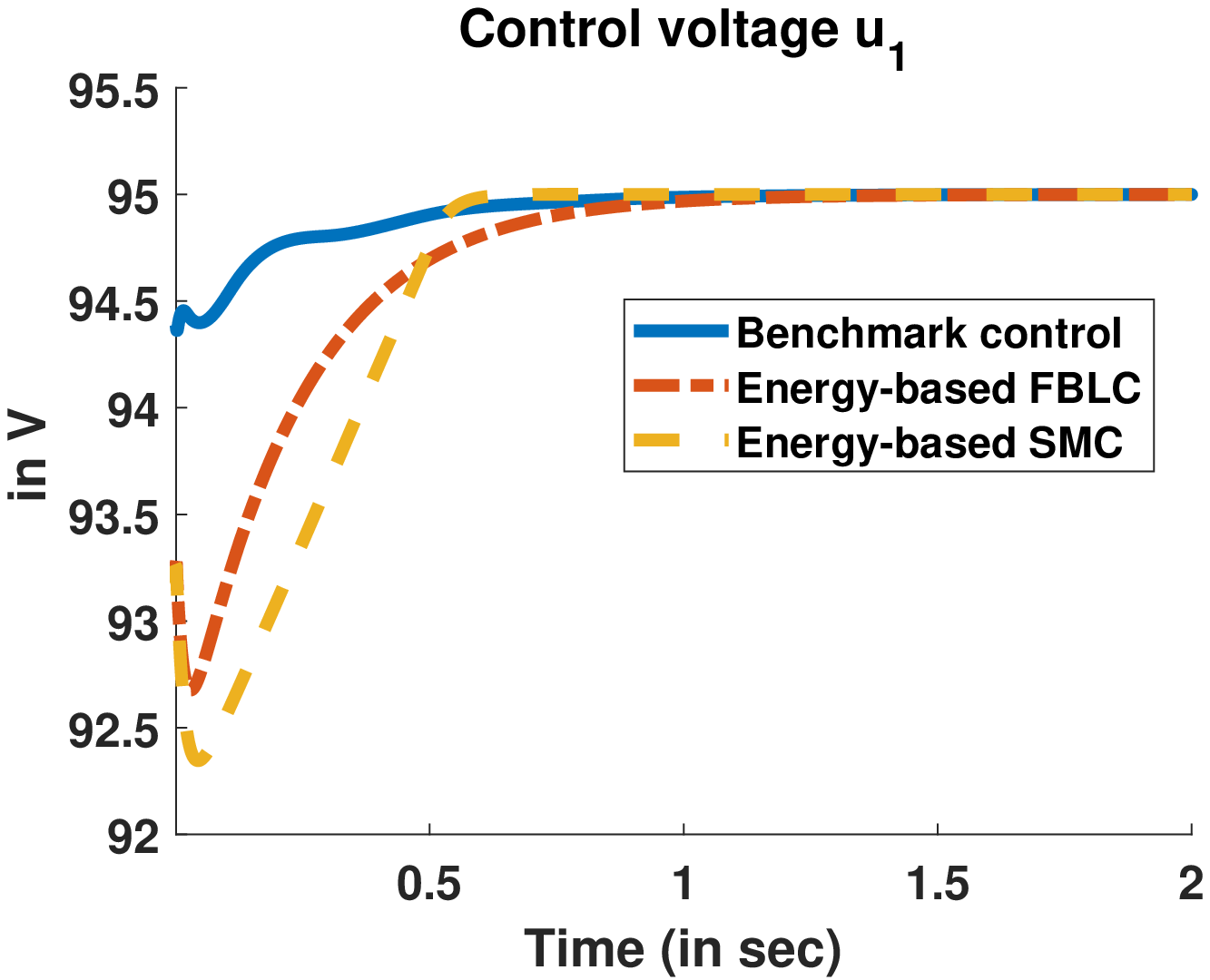}\label{Fig_Control_Sim3} }
	\caption{Comparison of trajectories obtained by applying constant gain control (Eqn. \eqref{Eqn:CtrlCG} in blue) and energy-based control (Equations \eqref{Eqn:ControlDynRLC} in red and yellow), 
	for powering constant power load of $1.2 kW$ with an additional objective to regulate terminal voltage $v_1$ to $80 V$.} 
	\label{Fig:EnergySim3}
\end{figure*}

Now consider the energy based control with $y_{z,1}$ as defined in Eqn. \eqref{Eqn:IntModelOutput_v1}. 
For the system considered in Fig. \ref{fig:system_generalRLC}, 
\begin{subequations}
\begin{equation}
     y_{z,1} = R_1 i_1^2 - u_1 i_1 - P_1^{m,m}
     \label{Eqn:OutputRLC}
\end{equation}
We utilize the higher layer control design as in Eqn. \eqref{Eqn:FBLC} and the mapping in Eqn. \eqref{Eqn:ControlImp},  which for the considered system leads to the following dynamic control
\begin{align}
&{\frac{{d{u_1}}}{{dt}} = \frac{{{u_1}}}{{{i_1}}}\frac{{d{i_1}}}{{dt}} - \frac{{{u_{z,1}}}}{{{i_1}}}}
\label{Eqn:ControlMapRLC}    
\\
&{{\rm{where}}}\notag\\
&{{u_{z,1}} = {\eta _1}\left( {{x_1},{{\dot x}_1}} \right) - {K_1}\left( {{y_{z,1}} - {y_{z,1}}^{ref}} \right) + {{\dot y}_{z,1}}^{ref}}\label{Eqn:EnergyControlRLC} \\
&{\qquad {\eta _1}\left( {{x_1},{{\dot x}_1}} \right) = \underbrace {\begin{array}{*{20}{l}}
{2{L_1}\frac{{d{i_1}}}{{dt}} + }\\
{2{C_1}\frac{{d{v_1}}}{{dt}}}
\end{array}}_{4{E_t}} + 2\underbrace {\left( {\begin{array}{*{20}{l}}
{{v_1}\frac{d}{{dt}}\left( {{i_1} - {i_2}} \right) - }\\
{\left( {{i_1} - {i_2}} \right)\frac{{d{v_1}}}{{dt}}}
\end{array}} \right)}_{{{\dot Q}_c}}}\notag\\
&{\qquad  + \underbrace {2{R_1}{i_1}\frac{{d{i_1}}}{{dt}} - 2{u_1}\frac{{d{i_1}}}{{dt}} + 4{E_{t}}({x_1},{{\dot x}_1})}_{{{\dot P}_1}^{r,out} + {{\dot Q}_1}^{r,out}}}
\label{Eqn:ControlDynRLC_eta}    
\end{align}
   
\begin{figure}[!htbp]
\begin{center}
\includegraphics[width=1.0\linewidth]{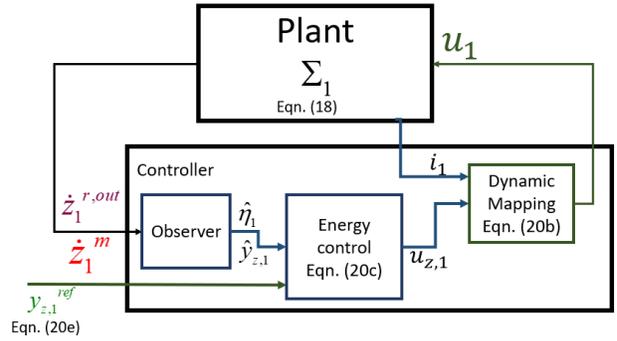}
\caption{Conceptual block diagram of distributed energy control implementation for $\Sigma_1$ of Fig. \ref{fig:system_generalRLC}}.
\label{fig:ControlBlockRLC}
\end{center}
\end{figure}

It is only for simulation runs that we utilize expressions in Eqn. \eqref{Eqn:ControlDynRLC_eta} and \eqref{Eqn:OutputRLC}. 
A conceptual block diagram of control to be utilized in practice is shown in Fig. \ref{fig:ControlBlockRLC}.
It involves usage of an observer that estimates the values of $\eta_1$ and $y_{z,1}$ that would be needed for energy control in Eqn. \eqref{Eqn:EnergyControlRLC}. 
Notably, both observer and control are designed in linear energy state space of Eqn. \eqref{Eqn_IntModelLinear}. In this paper we have only elaborated on the latter and the observer design is a topic of future research.  
Finally, mapping of the designed energy control $u_{z,1}$ to the physical control $u_1$ utilizes 
only real time measurements of conventional state variable $i_1(t)$ at the control port.
The reference values for use in the control design in Eqn. \eqref{Eqn:FBLC} for regulation objective requires revision of the output variable references  in energy space revised as shown in Eqn. \eqref{Eqn:CorrollaryRegulation} which translates to the system under consideration as follows:  
\begin{align}
  &y_{z,1}^{ref} = P_1^{r,in} \frac{v_1^{ref}}{v_1} = -P_{l,2}\frac{v_1^{ref}}{v_1}  \\ 
  &{{\dot y}_{z,1}}^{ref} = {{\dot P}_1}^{r,in}\left( {\frac{{{v_1}^{ref}}}{{{v_1}}}} \right) - {P_1}^{r,in}\left( {\frac{{{v_1}^{ref}}}{{{v_1}^2}}\frac{{d{v_1}}}{{dt}}} \right)\notag\\
   &  =  - {{\dot P}_{l,2}}\left( {\frac{{{v_1}^{ref}}}{{{v_1}}}} \right) + {P_{l,2}}\left( {\frac{{{v_1}^{ref}}}{{{v_1}^2}}\frac{{d{v_1}}}{{dt}}} \right)
   \label{Eqn:OutputRefRLCReg}
\end{align} 
\label{Eqn:ControlDynRLC}    
\end{subequations}
The incoming instantaneous power corresponds to non-zero reactive power absorption characterized in Eqn. \eqref{Eqn:Qinj}, which is the prime cause of non-zero persistent disturbances in conventional state space seen at the interface. 
\begin{equation}
  \dot{Q}_1^{r,in}=-v_2 \frac{d}{dt}\left(\frac{P_{l,2}}{v_2}\right) + \left(\frac{P_{l,2}}{v_2}\right) \dot{v}_2   = - \dot{P}_{l,2} + 2 \frac{P_{l,2} }{v_2} \frac{dv_2}{dt}
  \label{Eqn:Qinj}
\end{equation}

The resulting trajectories with energy based control are shown in Fig. \ref{Fig:EnergySim3} in red and yellow for FBLC and SMC variants respectively.  
It can be seen that the trajectories settle much faster energy-based control and also with tolerable transient overshoots. 
The gain values can further be adjusted based on the desired transient response characteristics. 
Furthermore, notice that the SMC variant clearly indicates finite settling time, while FBLC variant has exponential convergence. 

\subsection{Feed-forward checking of feasibility conditions}
The sufficient feasibility conditions in Eqn. \eqref{Eqn:FeasibilityConds} can be checked in a feed-forward way through pre-specified ranges. 
In this example, since $\Sigma_2$ is assumed a perfect power-consuming load, it can use its estimated variations in power to compute ranges of its outgoing interaction variable, which is sent out as ranges of incoming interaction variable to $\Sigma_1$. 
Given these ranges, $\Sigma_1$ is to compute the ranges of outgoing interaction variables using the definition in \eqref{Eqn:IntModelCommon} given its local control saturation and local predicted disturbance limits. 
If these ranges are within the range of the incoming interaction variable sent out by $\Sigma_1$, only then is the system feasible. 

Consider the load to be served as shown in Fig. \ref{fig:disturbance}. 
\begin{figure}[!htbp]
\begin{center}
\includegraphics[width=0.75\linewidth]{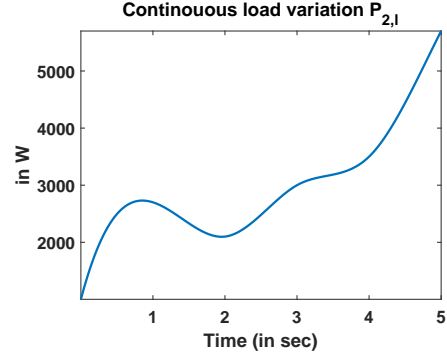}
\caption{Non-zero mean unmatched power disturbances under consideration}
\label{fig:disturbance}
\end{center}
\end{figure}
For this load at every $T_s = 1 s$, we compute the estimated ranges of instantaneous power and rate of change of instantaneous power for use in 
Eqn. \eqref{Eqn:IncomingIntRange} to compute range of incoming interaction variable. 
\begin{subequations}
\begin{align}
&{\rm{Given ~}}{P_{l,2}}(t) \in \left[ {\begin{array}{*{20}{c}}
{{P_{l,2}}^{\min }[k]}&{{P_{l,2}}^{\max }[k]}
\end{array}} \right]\qquad \notag\\
 &\Rightarrow P_1^{r,in}(t) \in \left[ {\begin{array}{*{20}{c}}
{ - {P_{l,2}}^{\max }[k]}&{ - {P_{l,2}}^{\min }[k]}
\end{array}} \right]\\
&{\rm{Given~ }}{{\dot P}_{l,2}}(t) \in \left[ {\begin{array}{*{20}{c}}
{{{\dot P}_{l,2}}^{\min }[k]}&{{{\dot P}_{l,2}}^{\max }[k]}
\end{array}} \right]\qquad \notag\\
 &\Rightarrow {{\dot Q}_1}^{r,in}(t) \in \left[ {\begin{array}{*{20}{c}}
{ - {{\dot P}_{l,2}}^{\max }[k]}&{ - {{\dot P}_{l,2}}^{\min }[k]}
\end{array}} \right]\\
&\quad  + \frac{2}{{{\tau'}}}\left[ {\begin{array}{*{20}{c}}
{{P_{l,2}}^{\min }[k]}&{{P_{l,2}}^{\max }[k]}
\end{array}} \right]\quad \forall t \in \left[ {k{T_s},(k + 1){T_s}} \right] \notag
\end{align}
\label{Eqn:IncomingIntRange}
\end{subequations}
Here $\tau'$ represents the time constant of voltage dynamics to capture transient overshoots. 
$\tau' = min\left\{10 \left(\frac{L_1}{R_1}\right), \frac{1}{K} \right\}$.
$K$ is the effective control gain corresponding to voltage dynamics. 
The way this is computed differs depending on the control design.
For the output feedback control in Eqn. \eqref{Eqn:CtrlCG}, $K = K_2$; and for energy-based FBLC control in Eqn. \eqref{Eqn:FBLC}, $K = K_1$. 

\begin{figure}[!htbp]
	\centering
	\subfigure[First component]{\includegraphics[width=0.45\linewidth]{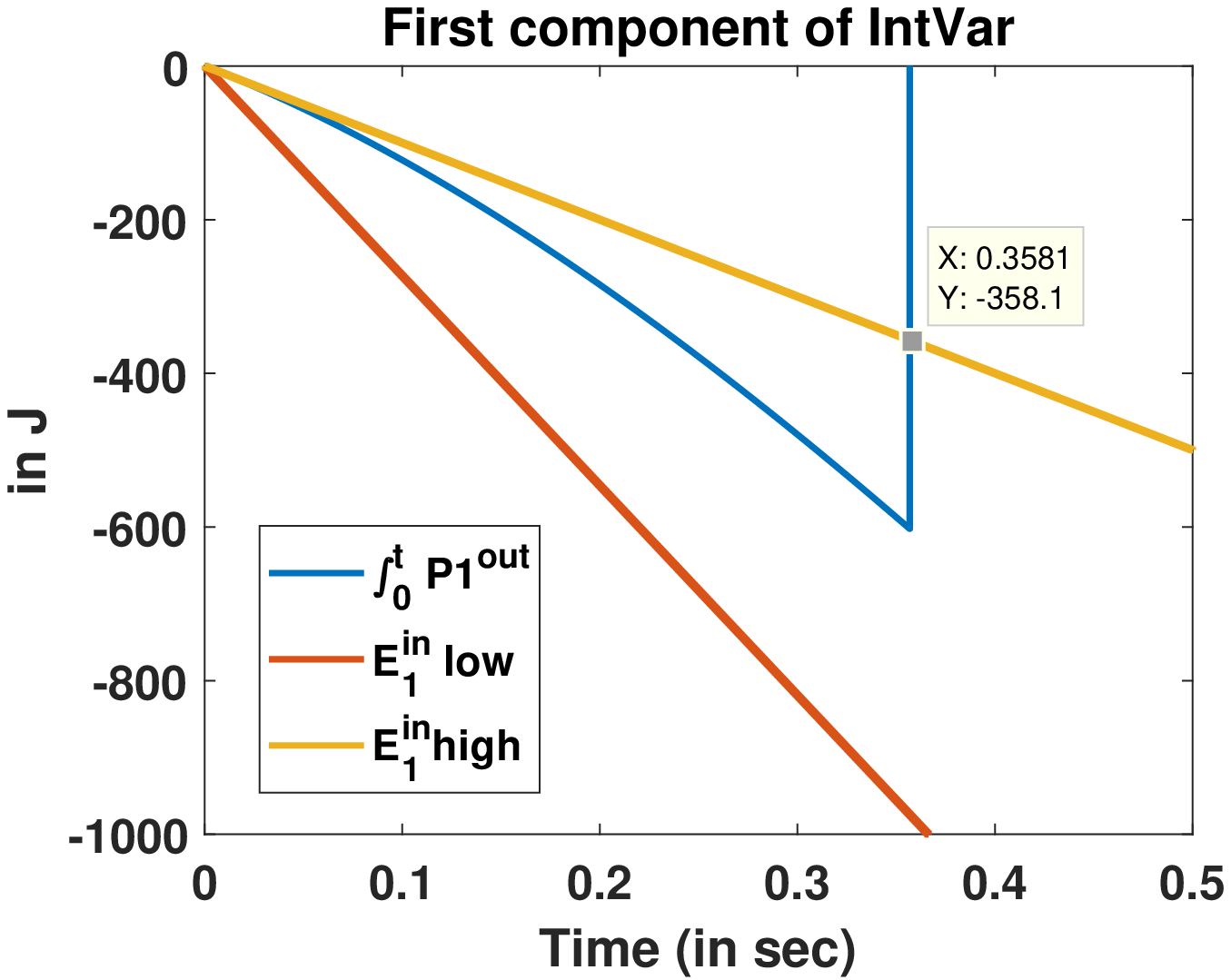}} 
	\subfigure[Second compnent]{\includegraphics[width=0.45\linewidth]{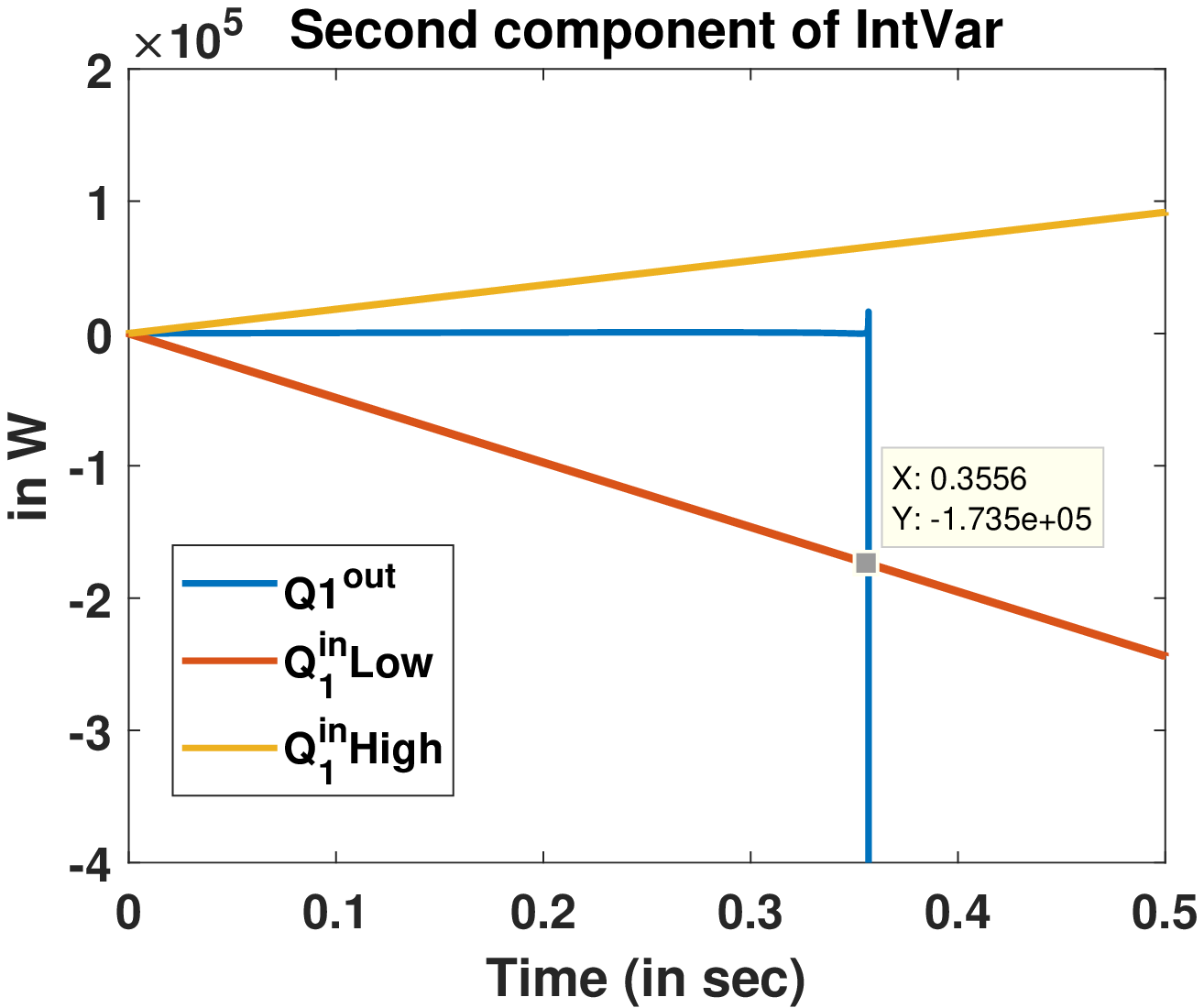}}
	\caption{Interaction variable for validation of feasibility conditions for $\Sigma_1$ of Fig. \ref{fig:system_general} when controlled through output feedback control in Eqn. \eqref{Eqn:CtrlCG}}
	\label{Fig:Feasibility_Unstable}
\end{figure}
The range validation plot for constant gain controllers is shown in Fig. \ref{Fig:Feasibility_Unstable}. 
Notice that the constant gain controller has outgoing reactive power violating the pre-computed incoming reactive power range at approximately 0.35 seconds, indicating infeasibility. 

\begin{figure}[!htbp]
	\centering
	\subfigure[First component]{\includegraphics[width=0.45\linewidth]{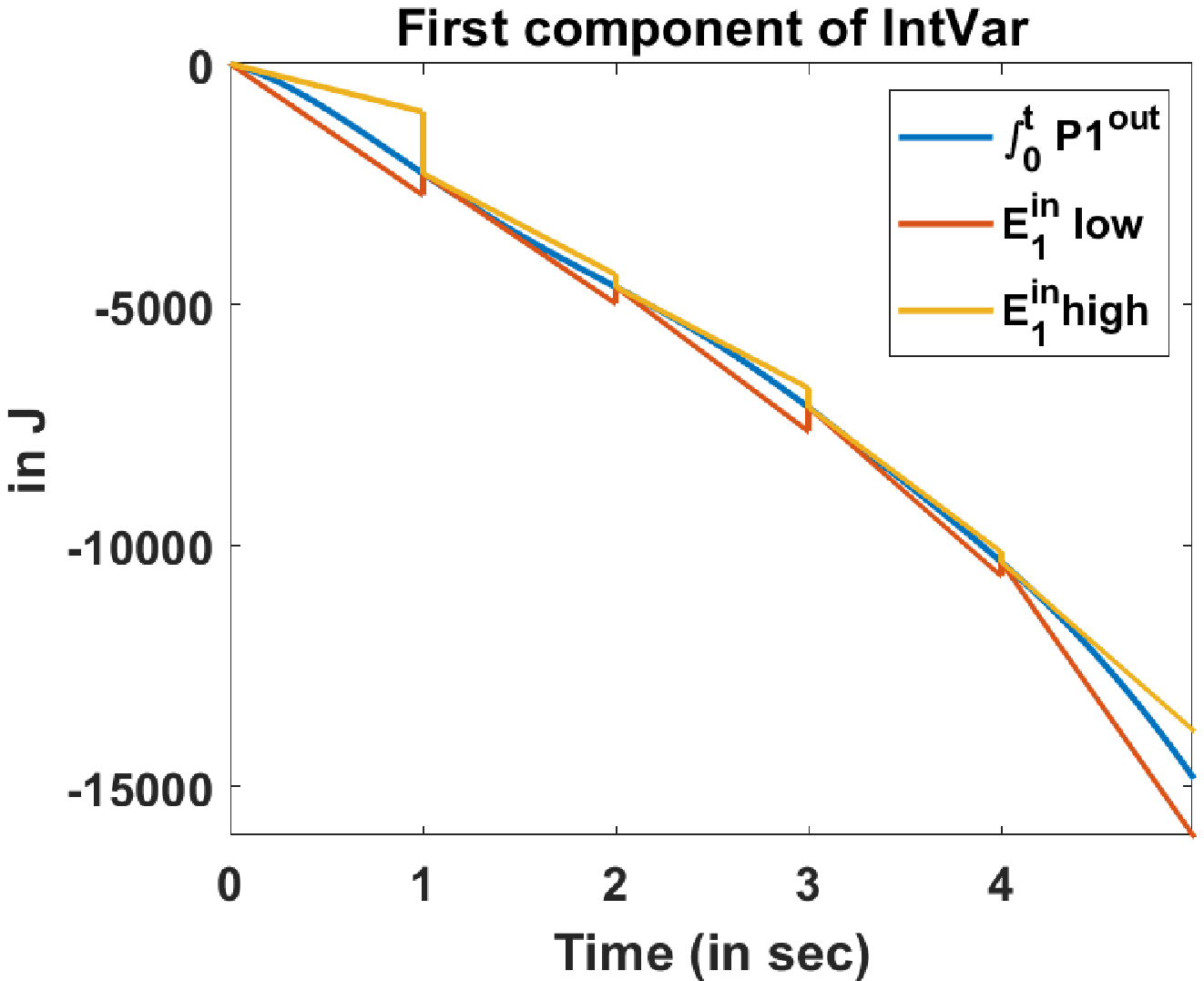}} 
	\subfigure[Second compnent]{\includegraphics[width=0.45\linewidth]{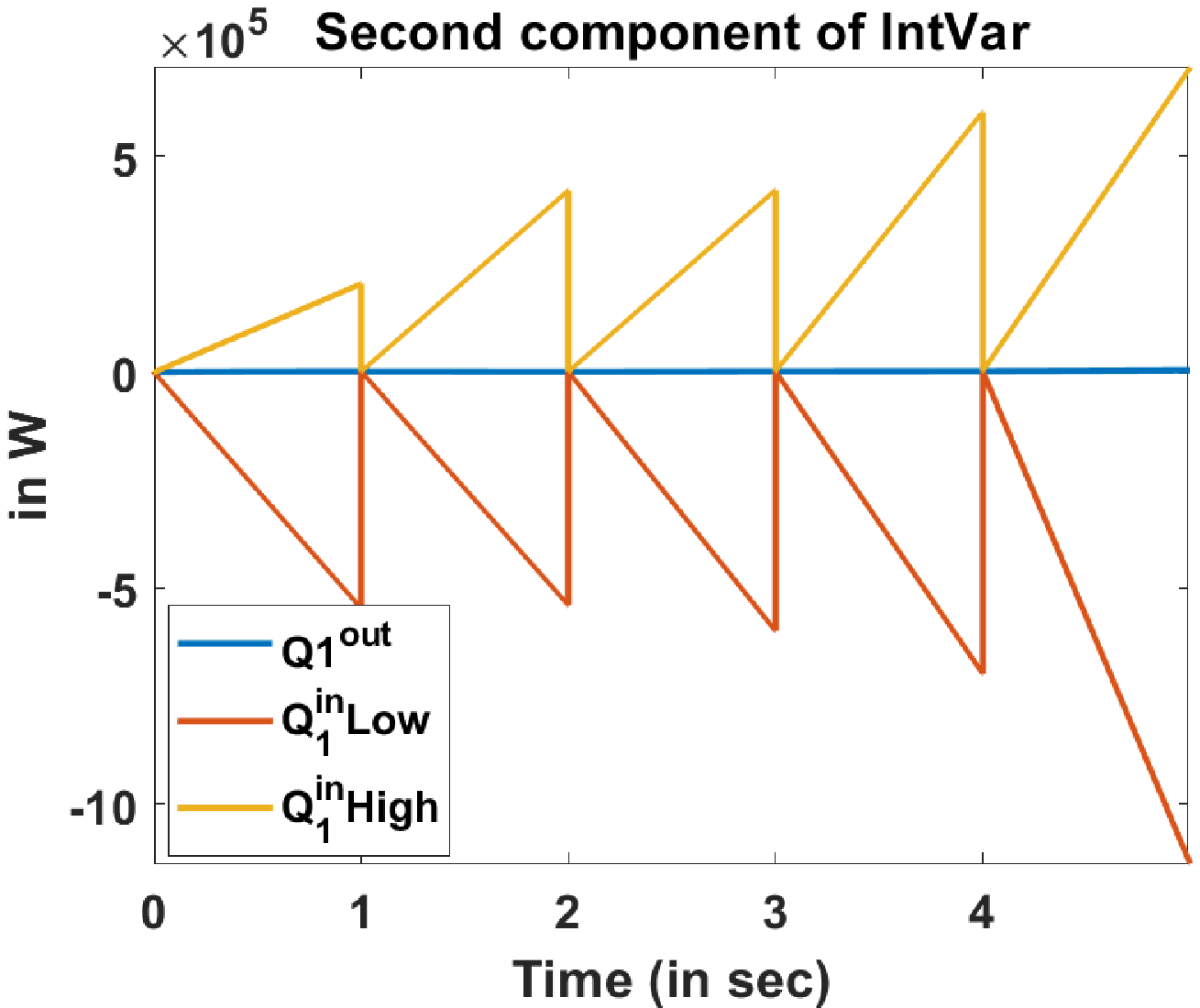}}
	\caption{Interaction variable for validation of feasibility conditions for $\Sigma_1$ of Fig. \ref{fig:system_general} when controlled through design in Eqn. \eqref{Eqn:ControlDynRLC}.}
	\label{Fig:Feasibility_Stable}
\end{figure}
In contrast, energy-based controllers for choice of output variable in Eqn. \eqref{Eqn:OutputRLC} remain stable for the values of load power considered. This is validated by the fact that the outgoing interaction variable always lies within the pre-computed incoming interaction variable range shown in Fig. \ref{Fig:Feasibility_Stable}. 
The corresponding plots of internal variables are shown in Fig. \ref{Fig:EnergySim9} in red. 

\subsection{Performance for continuous time-varying load}
Consider the load to be served as shown in Fig. \ref{fig:disturbance}. This can be perceived as non-zero-mean unmatched disturbances entering the source sub-system. 
\begin{figure*}[!htbp]
	\centering
	\subfigure[Current through inductor]{\includegraphics[width=0.3\linewidth]{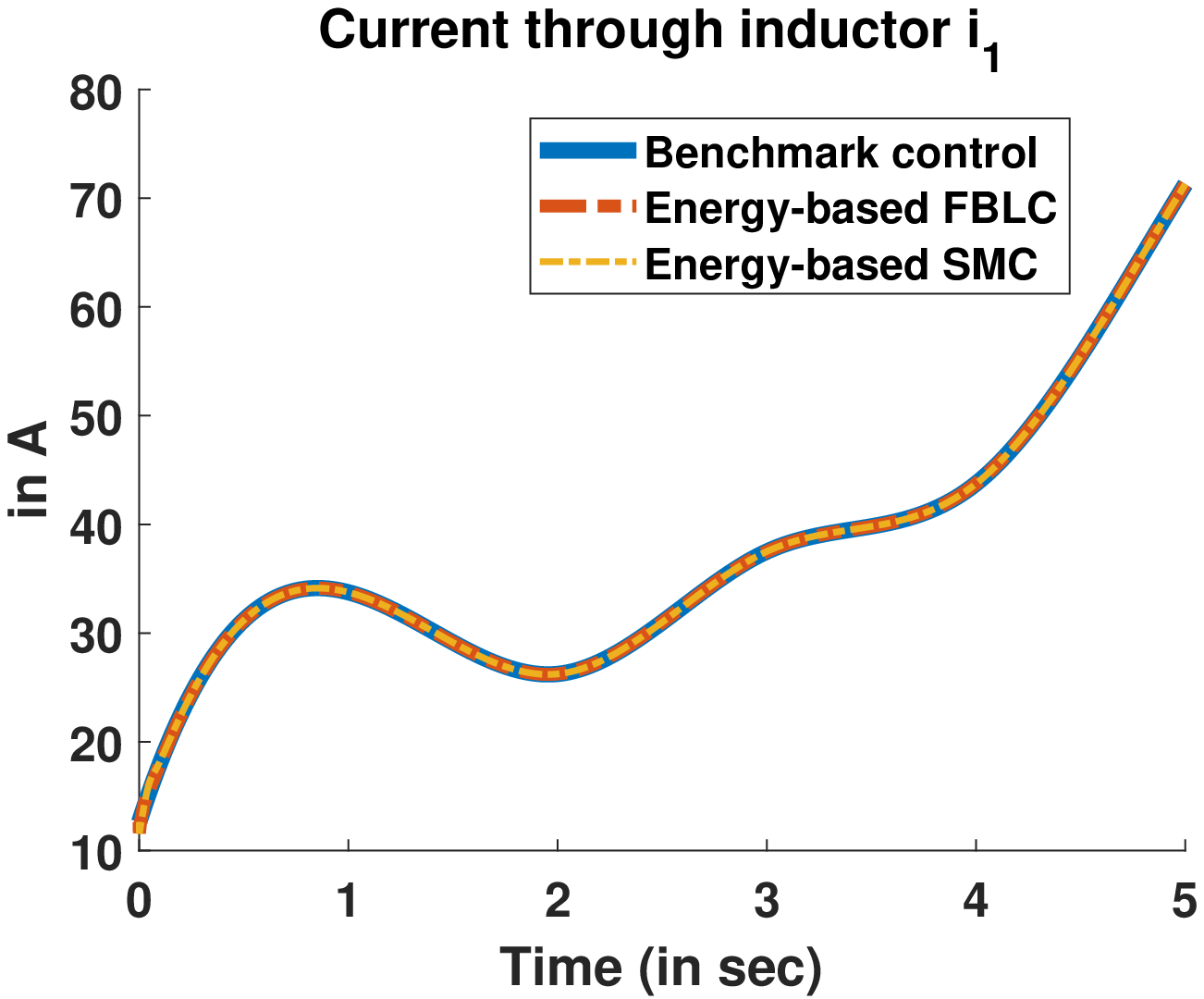}} \label{Fig_Sim9}
	\subfigure[Voltage across capacitor]{\includegraphics[width=0.3\linewidth]{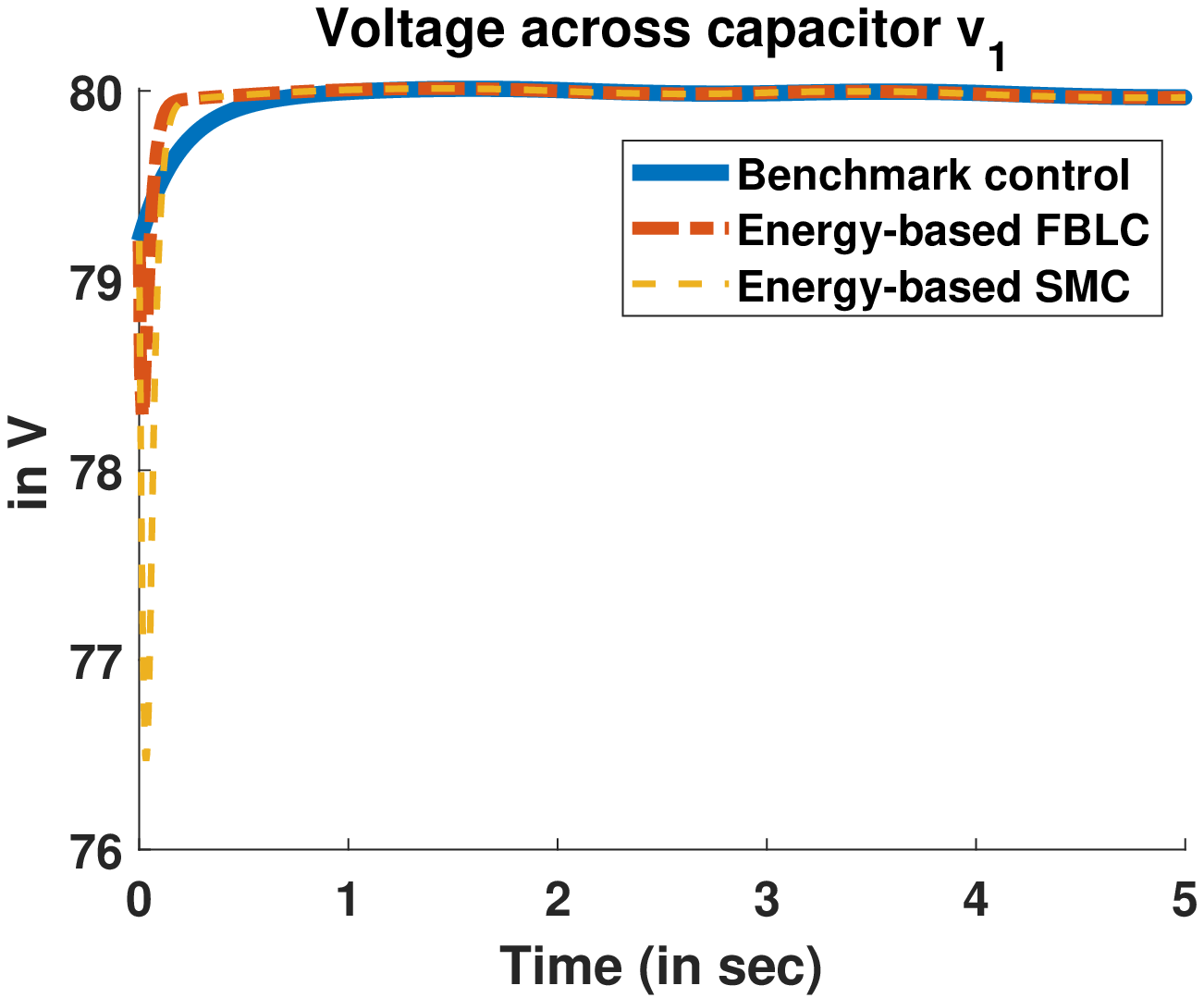}\label{Fig_Voltage_Sim9} }
	\subfigure[Rate of change of control voltage]{\includegraphics[width=0.3\linewidth]{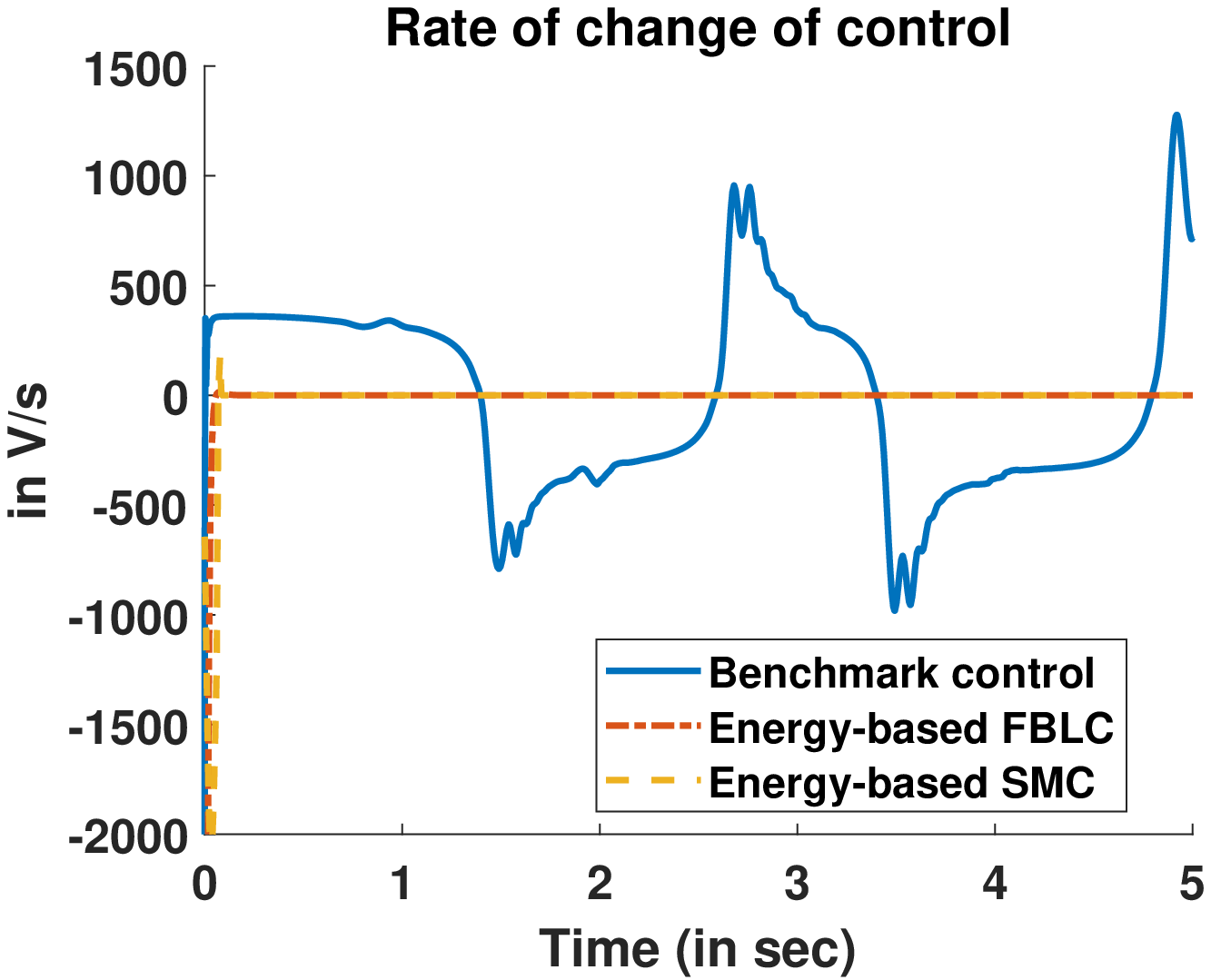}\label{Fig_Control_Sim9} }
	\caption{Comparison of trajectories obtained by applying benchmark nonlinear control (Eqn. \eqref{Eqn:BMCtrl}) and energy-based controllers  (Eqn. \eqref{Eqn:ControlDynRLC})}
	\label{Fig:EnergySim9}
\end{figure*}
The constant gain controllers fail for such time-varying disturbances. 
We thus consider another benchmark nonlinear control proposed in 
\cite{cucuzzella2019voltage} based on Brayton Moser potential function and is shown in Eqn. \eqref{Eqn:BMCtrl}. 
\begin{align}
&\text{Benchmark non-linear control \cite{cucuzzella2019voltage}:}\notag\\
&{u_1} = {R_1}{i_1} +{v_1}^{ref} - {L_1}\left(\frac{\Pi}{{v_1^2}} + K_3\right)\left( {{{\dot v}_1}} \right) \notag\\
& \qquad\qquad- \left( {{K_1}\left( {{v_1} - {v_1}^{ref}} \right) + {K_2}\frac{{d{v_1}}}{{dt}}}\right)
 \label{Eqn:BMCtrl}
\end{align}
Notice that this control utilizes an adaptive gain $\frac{\Pi}{{v_1}^2} + k_3$ for positive $k_3$, where $\Pi$ is the upper bound on the load power values being served. It has been shown in \cite{cucuzzella2019voltage}, the control in Eqn. \eqref{Eqn:BMCtrl} results in stable performance for all values of $v_1^{ref} > 0$ and $P_{l,2} \le \Pi$. 
The corresponding trajectories for the objective of output voltage regulation are shown in Fig. \ref{Fig:EnergySim9} in blue. 

For the same disturbances, the energy-based FBLC with choice of output variable in Eqn. \eqref{Eqn:OutputRLC} results in the response shown in red.
Also shown in yellow is the trajectory obtained by replacing the energy space design $u_{z,i}$ in Eqn. \eqref{Eqn:ControlDynRLC} with  the one in Eqn. \eqref{Eqn:SMC}.
Energy space control explicitly accounts for the interfaces' continuously varying reactive power injections and thus both variants of energy space control
result in better tracking performance in terms of the settling time.
All the controllers result in similar voltage regulation with minimal violations. 
However, the rate of change of control of our proposed method is much smaller than the nonlinear benchmark, leading to much lower wear-and-tear. 
This happens because 
the output variable $y_{z,i}$ to which energy space control responds is an aggregate variable which evolves much slower than conventional state space variables to which the benchmark controller responds to.

\subsection{Robustness to parameter uncertainty}
The nonlinear benchmark control is very sensitive to the parameter $R_1$. 
Shown in Fig. \ref{fig:param_uncertain} are the voltage trajectories obtained when this control is implemented for 10\% error in resistance value measurements. Overlaid on top of it are also the plots obtained with energy-based control. 
With both FBLC and SMC variants, we see that voltage regulation is perfect. This happens because the resistance value is not directly used in the control design. 
Instead, the design utilizes the total damping $\frac{E}{\tau}$ as defined in Eqn. \eqref{Eqn:OutputRLC}, which can be estimated from available measurements of energy and power measurements. 
\begin{figure*}
\subfigure[Voltage]{\includegraphics[width=0.33\linewidth]{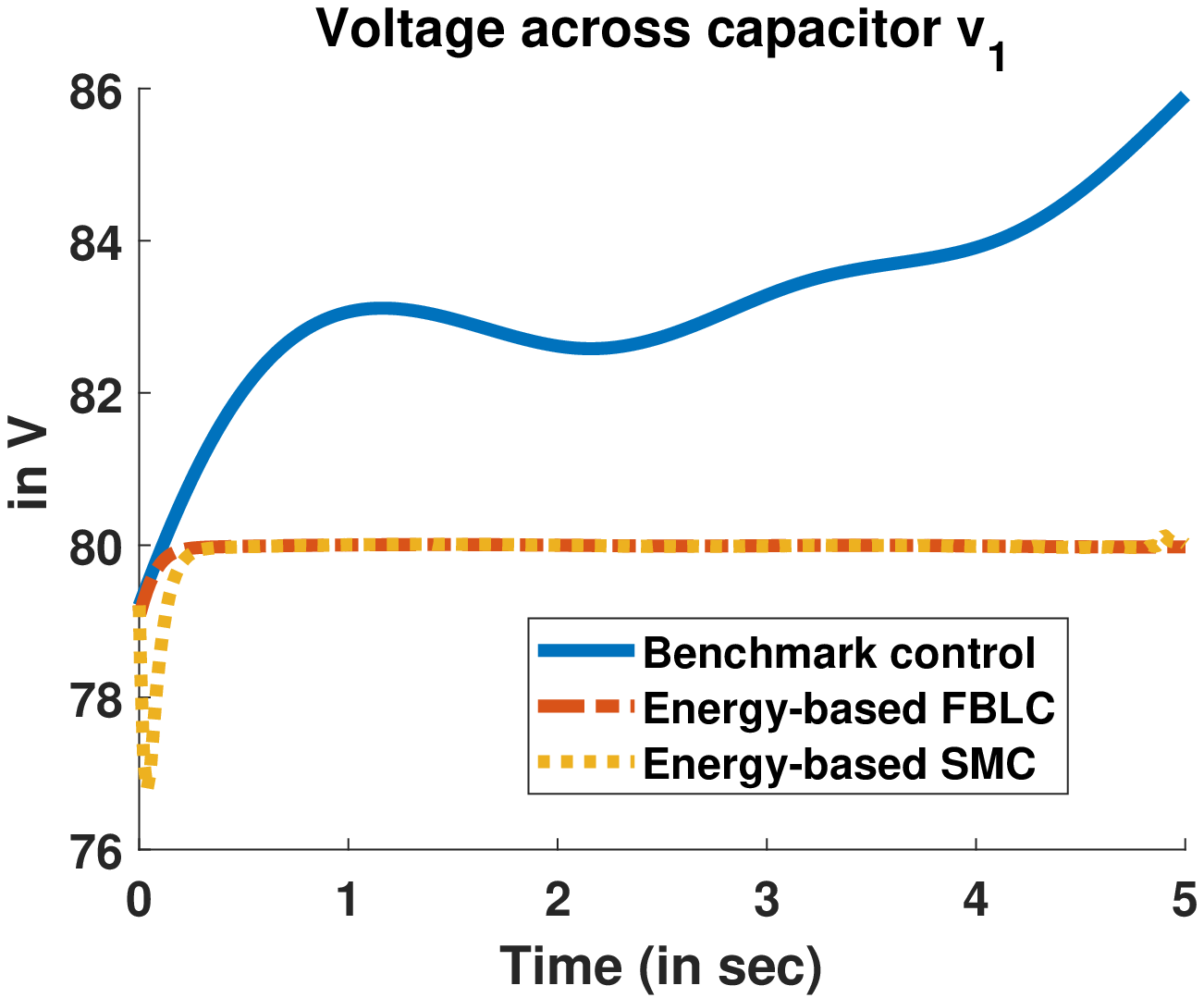}} 
\label{fig:param_uncertain}
\subfigure[Control]{\includegraphics[width=0.33\linewidth]{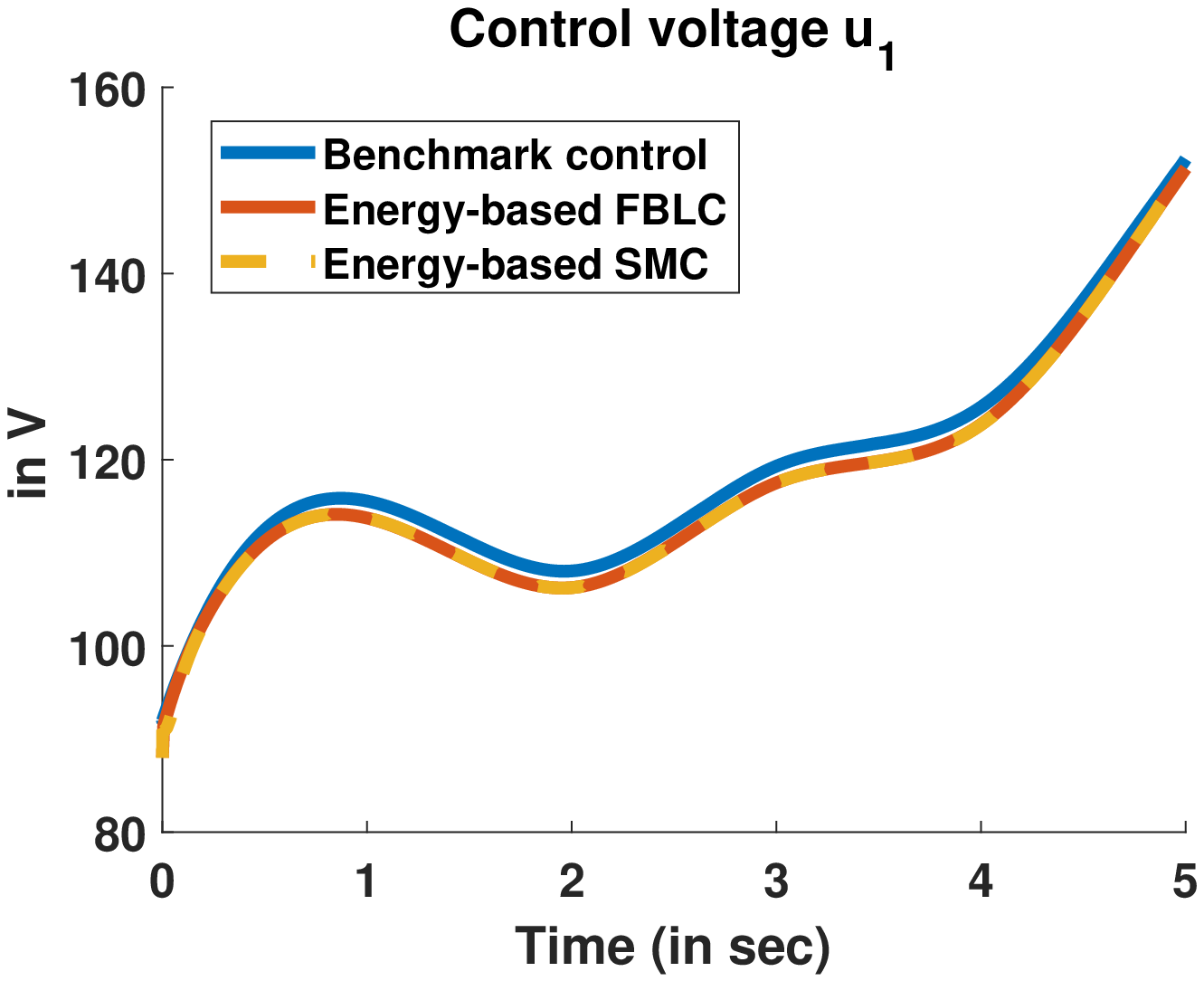}} \label{fig:param_uncertain_u}
	\subfigure[Rate of change of control]{\includegraphics[width=0.33\linewidth]{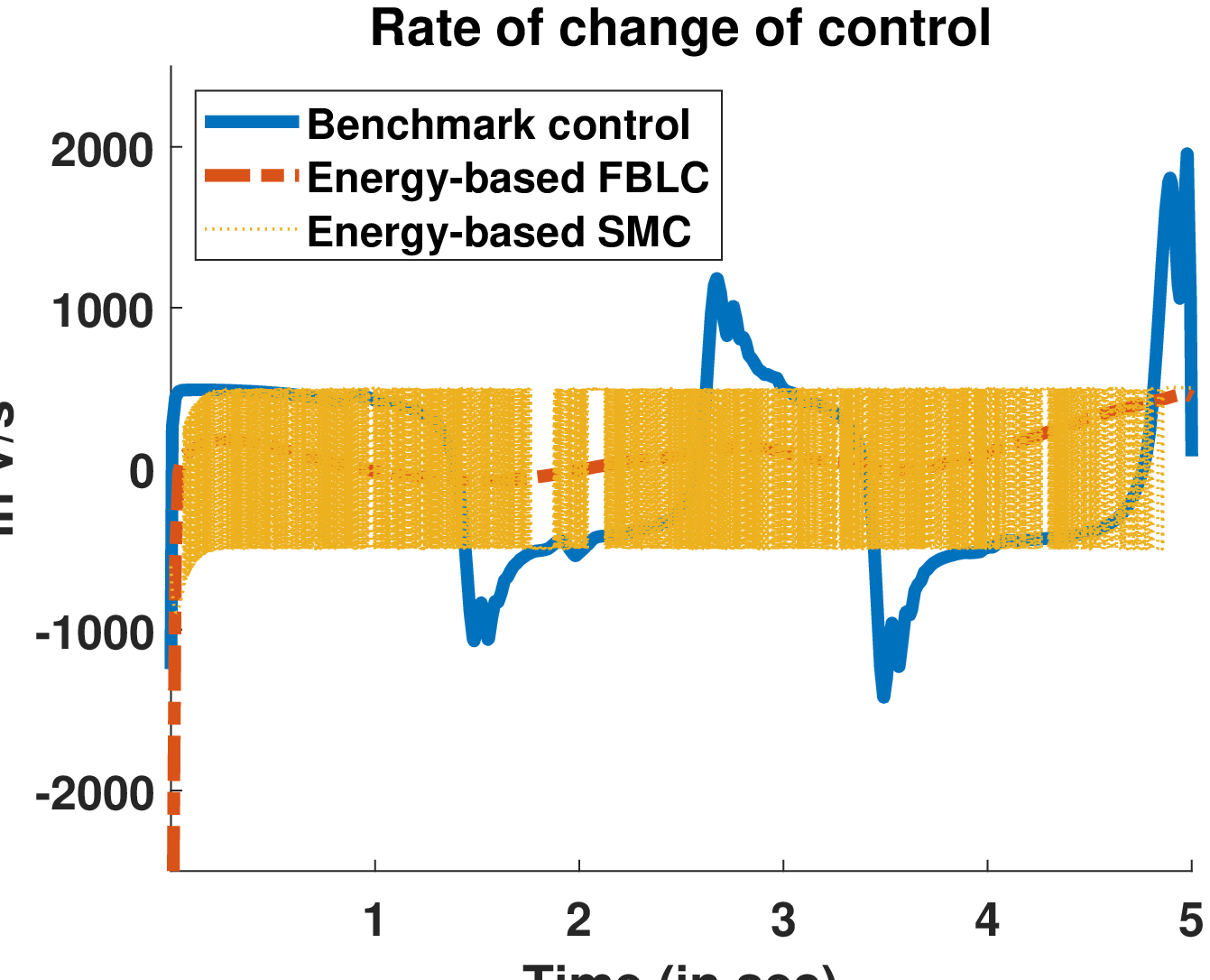}\label{fig:param_uncertain_CtrlRate} }
\caption{Robustness assessment for a time-varying load for 10\% error in resistance value for
benchmark nonlinear control (Eqn. \eqref{Eqn:BMCtrl}) and energy-based controllers  (Eqn. \eqref{Eqn:ControlDynRLC})
}
	\label{Fig:EnergySim10}
\end{figure*}
In presence of such parameter uncertainties, 
the corresponding rate of change of control trajectories of energy-based SMC control is shown in yellow in Fig. \ref{fig:param_uncertain_CtrlRate}. It  indicates clear chattering, while the rate of change of control trajectories are much smoother with energy-based FBLC. Note however that the actual control as in Fig. \ref{fig:param_uncertain_u} is smooth for all the controllers considered. 
It should be noted that in the FBLC variant, we assumed that the disturbance cancellation terms $\eta_i$ in Eqn. \eqref{Eqn:NormalFormOutput} can be measured perfectly, thereby leading to $\tilde{\eta_i} = 0$ in Eqn. \eqref{Eqn:OutputClosed}. Thus the voltage trajectories in Fig. \ref{fig:param_uncertain} appear to be perfect. 
From these experiments, we see that there is a tradeoff between the desired transient response, tolerable wear and tear, measurement and parameter uncertainties in the system. 

\section{Conclusions and future work}
\label{Sec:Conclusion}
In this paper, we consider the problem of distributed stabilization of interconnected systems. We propose an interactive multi-layered energy space modeling,
based on inherent mathematical structures stemming from the first principles of energy conservation laws, offering intuition about energy and power dynamics.
The higher layer models in energy space notably are linear and thus lend themselves to provable control design. 
The control in the lower layer physical model is either FBLC or SMC. 
We show that this multi-layered control design leads to sufficient modular conditions for provable feasibility of the interconnected system and the stability of internal dynamics. 
Notably, in the prior work, models have been utilized for designing competitive disturbance rejection control, which treats the interaction dynamics  as exogenous inputs.
However, these interaction dynamics seen by each component $i$ are state-dependent according to energy conservation principles. 
To capture this interdependence, we propose collaborative control design which instead of cancelling the interactions completely, aligns the references in a cooperative manner. Reference point of component $i$ $y_{z,i}^{ref}$ is obtained by exchanging the information $y_{z,j}$ from neighboring component $j$ resulting in their alignment rather than their cancellation. 
Such interactive models open new avenues for research on distributed cooperative control. 
Finally, we illustrate the effectiveness of the modeling and control design on an RLC circuit
and compare to other linear and nonlinear controllers considered in the literature.

The control design approach is modular and thus is scalable to arbitrarily large systems. However, 
this paper considers an example of single component subject to time-varying power disturbances for exposition of concepts. Further simulation-based evidence of multiple controllable components interacting with each other is needed. 
The interactive energy-based models proposed in this paper pose new challenges in distributed numerical simulations and control implementation. 
Further investigation of the interplay between numerical simulations and the hardware implementation of the control is being pursued.

\begin{ack}                               
This paper is a result of financial support by the NSF project Early-Concept Grants for Exploratory Research (EAGER) ``Fundamentals of Modeling and Control for the Evolving Electric Power System Architectures"   project ECCS-2002570. 
\end{ack}

\bibliographystyle{unsrt}        
\bibliography{autosam}           



\appendices
\section{Energy space variables definitions}
\label{Sec:EnergyDef}
\begin{subequations}
	\begin{definition} (Instantaneous power)\\
		\label{Defn:RealPower}
		The power interaction of the component $i$ with the rest of the system is given by the mapping 
		${P}_i: \mathcal{E}_i \times \mathcal{F}_i \rightarrow \mathcal{P}_i$ and is defined as
		$P_i = e_i^T f_i$ 
		where $e_i \in \mathcal{E}_i$ and $f_i \in \mathcal{F}_i$ respectively represent the effort and flow variables appearing at the ports of interconnection.
	\end{definition}
	\begin{definition}(Generalized reactive power dynamics)\\
		\label{Defn:ReacPower}
		Generalized reactive power $Q_i$ is a quantity whose time derivative is given by a mapping $\dot{Q}_i: \mathcal{TE}_i \times \mathcal{TF}_i \rightarrow \mathcal{TQ}_i$ and is defined as 
		$\dot{Q}_i = e_i^T \frac{df_i}{dt} - f_i^T\frac{de_i}{dt}$ 
		where $\mathcal{TE}_i$ and $\mathcal{TF}_i$ represents the tangent manifold of efforts and flow variables respectively.
	\end{definition}

Let the total stored energy in a generic form for the component defined by the model in Eqn.\eqref{Eqn:StateGeneralDyn} be defined using an inertia matrix $H_i(x_i)$. 
Similarly, if these elements are lossy, they are also associated with a symmetric dissipation matrix $B_i(x_i)$.
\cite{ilic2018multi,jeltsema2009multidomain}. 
	\begin{definition}(Stored energy)\\
		\label{Defn:StoredEnergy}
		Stored energy of component $i$ is given by the energy function ${E}_i: \mathcal{X}_i \rightarrow \mathbb{R}$ defined as 
		$E_i(x_i) = \frac{1}{2}x_i^TH_i(x_i) x_i$ 
		for an inertia matrix $H_i(x_i)~ \forall x_i \in \mathcal{X}_i$.
	\end{definition}
	
	\begin{definition}(Stored energy in tangent space)\\
		\label{Defn:StoredEnergyTangent}
		For the same positive definite matrix $H_i$ as defined in Definition \ref{Defn:StoredEnergy}, the stored energy in tangent space $E_{ti}$ is defined over the tangent bundle $\mathcal{TX}_i := \cup_{x_i \in \mathcal{X}_i} {x_i}\times T_{x_i}\mathcal{X}_i$ where $T_{x_i}\mathcal{X}_i$ is the tangent space of $\mathcal{X}_i$ at $x_i$, through the mapping ${E}_{t,i}: \mathcal{TX}_i \rightarrow \mathbb{R}$ as 
		$E_{t,i}(x_i, \dot{x}_i) = \frac{1}{2}\dot{x}_i^TH_i(x_i) \dot{x}_i$.  
	\end{definition}
	\begin{definition}(Time constant)\\
		\label{Defn:TimeConstant}
		Dissipation of a component $i$ is described through a dissipation function in quadratic form through the mapping ${D}_i: \mathcal{X}_i \rightarrow \mathbb{R}$ defined as $
		D_i(x_i) = x_i^TB_i(x_i)(x_i) x_i $
		for some matrix $B_i(x_i) \forall x_i \in \mathcal{X}_i$. 
		We define the time constant of component $i$ as the ratio of stored energy and the damping of the component, which by utilizing  Eqns. as 
		$\tau_i = \frac{E_i(x_i)}{D_i(x_i)}$ 
		Utilizing the Definitions \ref{Defn:StoredEnergy} and the expression for damping, this quantity can be upper bounded by the largest singular value of $D_i^{-1}(x_i)H_i(x_i)$. 
	\end{definition}
\end{subequations}                                        

\section{Proofs}
\subsection{Theorem \ref{Theorem:CtrlFBLC} of Section \ref{Sec:Control}}
\label{Sec:CtrlFBLCProof}
\begin{proof}
\textit{(1)} 
\begin{subequations}
    Consider now the candidate storage function 
        ${S_i} =  
        y_{z,i}$
    By taking its time derivative and plugging in the expressions from Eqn. \eqref{Eqn:OutputClosed}, we have
\begin{align}
{\frac{{d{S_i}(t)}}{{dt}} = -4{E_{t,i}} + \left( { - {K_i}\left( \begin{array}{l}
{y_{z,i}} - \\
{y_{z,i}}^{ref}
\end{array} \right) + 
{{\tilde \eta }_i} 
+ {{\dot y}_{z,i}}^{ref}} \right)}
\label{Eqn:Temp11}
	\end{align}
	By Assumption \ref{Assum_Energy}, $E_{t,i} \ge 0$. 
	Setting the upper  bound on $\tilde{\eta}_i$ 
	and assigning $y_{z,i}^{ref} = P_i^{r,in}$, the result follows. 

	\textit{(2) }
	Select the closed loop potential function as $V_i = \left|y_{z,i} - y_{z,i}^{ref}\right|$. Taking its time derivative and substituting the derivative expression in Eqn. \eqref{Eqn:OutputClosed}, we obtain
	\begin{equation}
\begin{array}{*{20}{l}}
{\frac{{d{V_i}(t)}}{{dt}} = {\mathop{\rm s}\nolimits} ign\left( \begin{array}{l}
{y_{z,i}} - \\
{y_{z,i}}^{ref}
\end{array} \right)\left( { - {K_i}\left( \begin{array}{l}
{y_{z,i}} - \\
{y_{z,i}}^{ref}
\end{array} \right) + \left( \begin{array}{l}
{{\tilde \eta }_i} - \\
4{E_{t,i}}
\end{array} \right)} \right)}\\
{\quad  \le  - {K_i}\left| {{y_{z,i}} - {y_{z,i}}^{ref}} \right| + \left| {{{\tilde \eta }_i} - 4{E_{t,i}}} \right|}
\end{array}
\label{Eqn:TempFBLCStab}
	\end{equation}
	The first inequality indicates that for arbitrary $\tilde{\eta}_i$ and $4E_{t,i}$, the set defined by $\left\{\tilde{x}_i \in \mathcal{\tilde{X}}_i | y_{z,i}\left(\tilde{x}_i\right) = y_{z,i}^{ref}\right\}$ is the only possibility for $\frac{dV_i}{dt} = 0$. From LaSalle's invariance principle, we thus have asymptotic stability when condition in Eqn. \eqref{Eqn:SuffStabCond} is satisfied. 
	\end{subequations}
\end{proof}

\subsection{Theorem \ref{Theorem:CtrlSMC} of Section \ref{Sec:Control}}
\label{Sec:CtrlSMCProof}
\begin{proof}
	\begin{subequations}
\textit{(1) }
By denoting
	distance of operating point from the sliding surface as $\sigma_i = y_{z,i} - y_{z,i}^{ref}$, consider $S_i = \frac{1}{2}\sigma_i^2$.Taking its time derivative and utilizing Equations \eqref{Eqn:NormalFormOutput} and Eqn. \eqref{Eqn:SMC}, we obtain
	\begin{equation}
	    \begin{array}{l}
{{\dot S}_i} = {\sigma _i}{{\dot \sigma }_i} = {\sigma _i}\left( {\left( { - 4{E_{t,i}} + {\eta _i} + {u_{z,i}}} \right) - {{\dot y}_{z,i}}^{ref}} \right)\\
 = {\sigma _i}\left( { - 4{E_{t,i}} + {\eta _i}} \right) + {\sigma _i}\left( { - {\alpha _i}sign({\sigma _i})} \right)\\
 \le \left| {{\sigma _i}} \right|{{\bar L}_i} - ({{\bar L}_i} + {K_i})\left| {{\sigma _i}} \right|\\
 \le  - {K_i}\left| {{\sigma _i}} \right| \le  - K_i{\left( {2{S_i}} \right)^{\frac{1}{2}}}
\end{array}
	    \label{Eqn:TempSMC}
	\end{equation}
	Taking the time integral on both sides, we can establish 
	\begin{equation}
	\begin{array}{l}
\int\limits_0^t {{S_i}^{ - \frac{1}{2}}{{\dot S}_i}}  \le  - \sqrt 2 \int\limits_0^t {{K_i}dt} \\
\begin{array}{*{20}{l}}
{ \Rightarrow S_i^{\frac{1}{2}}(t) - S_i^{\frac{1}{2}}(0) \le  - \frac{K_i}{\sqrt 2} t}\\
{ \Rightarrow \left| {{\sigma _i}(t)} \right| \le \left| {{\sigma _i}(0)} \right| - \frac{K_i}{\sqrt 2}t}
\end{array}
\end{array}
\label{Eqn:TempSMCDerivation}
	\end{equation}
	Since $|\sigma_i|\ge 0$, we have $\sigma_i(t)$ approach zero in a time upper bounded by $t_r=\frac{\sqrt{2}}{K_i} \left|\sigma_i(0)\right|$
	\end{subequations}
	
	\textit{(2) }
	\begin{subequations}
	From Eqn. \eqref{Eqn:TempSMCDerivation}, 
	 $  \frac{{d{S_i}}}{{dt}} \le -K_i \left| \sigma_i\right|$.
	The right hand sign is negative semi-definite and is zero only when $\sigma_i = 0$ for arbitrary operating conditions and time-varying disturbances. i.e. for when $\tilde{x}_i \in y_{z,i}(\tilde{x}_i) = {P}_i^{r,in}$. 
	By LaSalle's invariance principle, the equilibrium set thus defined in the theorem statement is asymptotically stable. 
	\end{subequations}
\end{proof}

\subsection{Lemma \ref{Theorem_ComponentDissip} of Section \ref{Sec:Feasibility}}
\label{Sec:FeasibilityProof}
\begin{proof}
At instantaneous time, the sufficient feasibility conditions can also be re-stated as an inequality condition to be satisfied element-by-element as follows: 
\begin{equation}
       z_i^{r,out} \preceq z_i^{r,in}
        \label{Eqn:FeasibilityConds}
\end{equation}
	
	\begin{subequations}
		Considering the  derivative of the first element of the inequality and the second element of the inequality and adding them up, we have
		\begin{equation}
		{\dot P_i}^{r,out}+ {\dot Q_i}^{r,out} \le  {\dot P_i}^{r,in}+ {\dot Q_i}^{r,in}
		\label{Eqn_Temp2}
		\end{equation}
		By utilizing the definition of outgoing interaction variable in Eqn. \eqref{Eqn:IntModelEasy}, 
		\begin{align}
		{{\dot P_i}^{r,out}} & = {\dot p_i} + \frac{d}{dt}\left({\frac{1}{{{\tau _i}}}{E_i}} \right) \\
		{\dot Q_i}^{r,out} &= {4{E_{t,i}}}  - {\dot p_i}
		\label{Eqn_Temp3}
		\end{align}
		Now combining Eqns. \eqref{Eqn_Temp2} - \eqref{Eqn_Temp3}, we have
		\begin{equation}
		\frac{d}{dt}{\underbrace {\left(\int\limits_0^t {{{{4{E_{t,i}}(s)}}}ds} + {{\frac{1}{\tau }}_i}{E_i}(t)\right)}_{{S_i(t)}}}  \le  \underbrace{{\dot P_i}^{r,in}  + {\dot Q_i}^{r,in}}_{{s_i (t)}} \label{Eqn:FeasTemp}
		\end{equation}
		The result follows thereby. 
	\end{subequations}	
\end{proof}

\subsection{Theorem \ref{Corollary_Stability} of Section \ref{Sec:Feasibility}}
\label{Sec:StabilityProof}
\begin{proof}
By adding up the dissipativity conditions in Eqn. \eqref{Eqn:FeasTemp} for each of the components in the network $\mathcal{N}$,
\begin{equation}
    \begin{array}{l}
\frac{d}{{dt}}\left( {\sum\limits_{i \in {\mathcal{N}}} {{S_i}({x_i})} } \right) \le \sum\limits_{i \in {\mathcal{N}}} {\left( {\dot P_i^{r,in} + \dot Q_i^{r,in}} \right)} \\
\sum\limits_{i \in {\mathcal{N}}} {\left( {\dot P_i^{r,out} + \dot Q_i^{r,out}} \right)}  \le \sum\limits_{i \in {\mathcal{N}}} { - \left( {\sum\limits_{j \in {{\mathcal{C}}_i}} {\left( {\dot P_j^{r,out} + \dot Q_j^{r,out}} \right)} } \right)} \\
{\bf{1}}_{|N| \times 1}^T\left( \begin{array}{l}
{{{\bf{\dot P}}}^{r,out}} + \\
{{{\bf{\dot Q}}}^{r,out}}
\end{array} \right) \le -{\bf{1}}_{|N| \times 1}^T {\bf{L}}_{\left|\mathcal{N}\right| \times \left|\mathcal{N}\right|} \left( \begin{array}{l}
{{{\bf{\dot P}}}^{r,out}}\\
 + {{{\bf{\dot Q}}}^{r,out}}
\end{array} \right)\\
{\bf{1}}_{|\mathcal{N}| \times 1}^T\left( {\bf{I}_{|\mathcal{N}|\times |\mathcal{N}|} + \bf{L}_{\left|\mathcal{N}\right| \times \left|\mathcal{N}\right|}} \right)\left( {{{{\bf{\dot P}}}^{r,out}} + {{{\bf{\dot Q}}}^{r,out}}} \right) \le 0\\
  \Rightarrow 
  \frac{d}{dt}\left({\bf{1}}_{|\mathcal{N}| \times 1}^T\left( {\bf{I}_{|\mathcal{N}|\times |\mathcal{N}|} + \bf{L}_{\left|\mathcal{N}\right| \times \left|\mathcal{N}\right|}} \right)\bf{S}\right) \le 0
\end{array}
    \label{Eqn:StabTemp}
\end{equation}
Here, $\bf{P}, \bf{Q}$ are the vector form notation of the real and reactive power of each of the components in the network.
$|.|$ operator here represents the cardinality of the set. 
$\bf{I}$ and $\bf{1}$ respectively represent the identity matrix and the column vector comprising element $1$,  with its order in the subscript.
$\bf{L}$ represents a symmetric matrix where the element $L_{ij} = 1$ is the components $i$ and $j$ are connected. 
Finally, $\mathbf{S}$ is the vectorized representation of potential functions considered for each of the components. 
Since each element of $\mathbf{S}$ is positive definite due to the assumptions \ref{Assum_Energy} and \ref{Assum_TimeConstant}, we have that the entire bracketed term in the last equation to be positive definite. This bracketed term can be considered as the candidate Lyapunov function to thereby prove stabilitiy in the sense of Lyapunov. 
\end{proof}

\end{document}